\newtheorem{lemma}{Lemma}
\newtheorem{theorem}{Theorem}
\def\bkE{{\rm I\kern-.17em E}}
\def\bk1{{\rm 1\kern-.17em l}}
\def\bkD{{\rm I\kern-.17em D}}
\def\bkR{{\rm I\kern-.17em R}}
\def\bkP{{\rm I\kern-.17em P}}
\def\bkZ{{\bf{Z}}}
\def\bkE{{\rm I\kern-.17em E}}
\def\bk1{{\rm 1\kern-.17em l}}
\def\bkD{{\rm I\kern-.17em D}}
\def\bkR{{\rm I\kern-.17em R}}
\def\bkP{{\rm I\kern-.17em P}}
\newcommand{\pushright}[1]{\ifmeasuring@#1\else\omit\hfill$\displaystyle#1$\fi\ignorespaces}
\newcommand{\pushleft}[1]{\ifmeasuring@#1\else\omit$\displaystyle#1$\hfill\fi\ignorespaces}
\def\bkZ{{\bf{Z}}}
\def\b12{(\beta_1,\beta_2)}
\newenvironment{example}{{\noindent \bf Example}}{\hfill $\square$\hspace{-4.5pt}\vspace{6pt}}
\newcounter{example}
\renewcommand{\theexample}{\thesection.\arabic{example}}
\newcounter{remark}
\renewcommand{\theremark}{\thesection.\arabic{remark}}
\def\t{^\top}
\newlength{\noteWidth}
\long\def\notes#1{\ifinner
{\tiny #1}
\else
\marginpar{\parbox[t]{\noteWidth}{\raggedright\tiny #1}}
\fi\typeout{#1}}
 \def\notes#1{\typeout{read notes: #1}} 
\newcommand{\inv}{^{-1}}
\def\diag{\mathop{\hbox{\rm diag}}}
\def\exp{\mathop{\hbox{\rm exp}}}
\def\spose#1{\hbox to 0pt{#1\hss}}
\def\text #1{\hbox{\quad#1\quad}}
\def\nthinsp{\mskip -2   mu}
\def\superstar{^{\raise 0.5pt\hbox{$\nthinsp *$}}}
\def\SUPERSTAR{^{\raise 0.5pt\hbox{$*$}}}
\def\lamstarT {\lambda^{\raise 0.5pt\hbox{$\nthinsp *$}T}}
\let\forallnew\forall
\renewcommand{\forall}{\forallnew\ }
\let\forall\forallnew
		\def\bkE{{\rm I\kern-.17em E}}
		\def\bk1{{\rm 1\kern-.17em l}}
		\def\bkD{{\rm I\kern-.17em D}}
		\def\bkR{{\rm I\kern-.17em R}}
		\def\bkP{{\rm I\kern-.17em P}}
		\def\bkY{{\bf \kern-.17em Y}}
		\def\bkZ{{\bf \kern-.17em Z}}
		\def\bkC{{\bf  \kern-.17em C}}
		\def\bsp{\begin{split}}
		\def\beq{\begin{eqnarray}}
		\def\bal{\begin{align*}}
		\def\bc{\begin{center}}
		\def\be{\begin{enumerate}}
		\def\bi{\begin{itemize}}
		\def\bs{\begin{small}}
		\def\bS{\begin{slide}}
		\def\ec{\end{center}}
		\def\ee{\end{enumerate}}
		\def\ei{\end{itemize}}
		\def\es{\end{small}}
		\def\eS{\end{slide}}
		\def\eeq{\end{eqnarray}}
		\def\eal{\end{align*}}
		\def\esp{\end{split}}
		\def\qed{ \vrule height7.5pt width7.5pt depth0pt}  
	\def\cp2problem#1#2#3#4{\fbox
		 {\begin{tabular*}{0.9\textwidth}
			{@{}l@{\extracolsep{\fill}}l@{\extracolsep{6pt}}l@{\extracolsep{\fill}}c@{}}
				#1 & & $#4 $ 
			\end{tabular*}}}
		\def\bkE{{\rm I\kern-.17em E}}
		\def\bk1{{\rm 1\kern-.17em l}}
		\def\bkD{{\rm I\kern-.17em D}}
		\def\bkR{{\rm I\kern-.17em R}}
		\def\bkP{{\rm I\kern-.17em P}}
		\def\bkZ{{\bf{Z}}}
\newcommand {\beeq}[1]{\begin{equation}\label{#1}}
\newcommand {\eeeq}{\end{equation}}
\newcommand {\bea}{\begin{eqnarray}}
\newcommand {\eea}{\end{eqnarray}}
\def\texitem#1{\par\smallskip\noindent\hangindent 25pt
               \hbox to 25pt {\hss #1 ~}\ignorespaces}
\def\bsp{\begin{split}}
		\def\beq{\begin{eqnarray}}
		\def\bal{\begin{align*}}
		\def\bc{\begin{center}}
		\def\be{\begin{enumerate}}
		\def\bi{\begin{itemize}}
		\def\bs{\begin{small}}
		\def\bS{\begin{slide}}
		\def\ec{\end{center}}
		\def\ee{\end{enumerate}}
		\def\ei{\end{itemize}}
		\def\es{\end{small}}
		\def\eS{\end{slide}}
		\def\eeq{\end{eqnarray}}
		\def\eal{\end{align*}}
		\def\esp{\end{split}}
		\def\qed{ \vrule height7.5pt width7.5pt depth0pt}  
\def\LOLP{\mathsf{LOLP}}
\newcommand{\jk}[1]{  \ifthenelse{\boolean{showcomments}}
{ \textcolor{red}{(JK says:  #1)}} {}  }
\newcommand{\aak}[1]{  \ifthenelse{\boolean{showcomments}}
{ \textcolor{blue}{(AAK says:  #1)}} {}  }
\newcommand{\vivek}[1]{  \ifthenelse{\boolean{showcomments}}
{ \textcolor{red}{(Vivek says:  #1)}} {}  }
\newcommand{\ra}{\rightarrow}
\newcommand{\ignore}[1]{}
\newcommand{\ua}{\uparrow}
\newcommand{\da}{\downarrow}
\newcommand{\prob}[1]{\mathbb{P}\left[#1\right]}
\newcommand{\Exp}[1]{\mathbb{E}\left[#1\right]} 
\newcommand{\indicator}{\boldsymbol{1}}
\begin{document}

\title{Sizing Storage for Reliable Renewable Integration: A Large
  Deviations Approach}
\author{Vivek Deulkar, \ Jayakrishnan Nair and Ankur A. Kulkarni
\thanks{Vivek and Jayakrishnan are with the Department of Electrical Engineering, 
Indian Institute of Technology Bombay, Mumbai, India. Ankur is with the Systems and 
Control Engineering group, Indian Institute of Technology Bombay, Mumbai, India. 
The authors can be reached at \texttt{\footnotesize vivekdeulkar@iitb.ac.in, jayakrishnan.nair@iitb.ac.in, 
kulkarni.ankur@iitb.ac.in}. A preliminary version of this paper will be presented at IEEE PowerTech 2019\cite{deulkar2019sizing}.}}  

\maketitle

\begin{abstract}
  The inherent intermittency of wind and solar generation presents a
  significant challenge as we seek to increase the penetration of
  renewable generation in the power grid. Increasingly, energy storage
  is being deployed alongside renewable generation to counter this
  intermittency. However, a formal characterization of the reliability
  of renewable generators bundled with storage is lacking in the
  literature. The present paper seeks to fill this gap. We use a
  Markov modulated fluid queue to model the loss of load probability
  ($\LOLP$) associated with a renewable generator bundled with a
  battery, serving an uncertain demand process. Further, we
  characterize the asymptotic behavior of the $\LOLP$ as the battery
  size scales to infinity. Our results shed light on the fundamental
  limits of reliability achievable, and also guide the sizing of the
  storage required in order to meet a given reliability
  target. Finally, we present a case study using real-world wind power
  data to demonstrate the applicability of our results in practice.
\end{abstract}

\IEEEpeerreviewmaketitle

\section{Introduction}

\IEEEPARstart{E}lectric supply is an indispensable part of modern  life and is thus required to meet extremely stringent requirements of reliability. Classically, loss of load has been caused due to operational reasons, such as a generator undergoing maintenance, grid conditions, such as the overdrawing of power, or due to extraneous circumstances, such as natural calamities. 
With increasing penetration of renewable generation, the natural variability of the output of these generators adds a new, supply-side cause for the loss of load. Fortunately, with the growing capacity of renewable generation, we are also witnessing a softening of storage prices. Thanks to this, an increasing number of renewable generators are countering their variability, not with conventional, fast-ramping generation, but rather with storage \cite{url_energy_storage_association, url_US_battery_storage_market_trend}. Thus, we believe that the renewable generator of the future  will not be a standalone renewable generator, but rather a renewable  generator \textit{bundled} with a battery. 

Keeping in mind reliability as one of the central concerns of the electricity infrastructure, the introduction of a battery-renewable generator bundle raises some basic questions. To begin, how does one account for this bundle in calculations for system reliability? How does this reliability change with increasing variability of the renewable source (wind or solar)? How does this change with increasing capacity of the battery? If one targets a certain level of reliability, how much battery storage is required to attain this level? And finally, are there fundamental limitations on the performance of a bundle, in the sense that are there levels of performance that are simply unattainable?

A moment's thought reveals that answers to these questions cannot be
obtained by only considering one snapshot in time. To understand this,
consider the hypothetical scenario where there is no battery and only
a renewable generator attached to a constant load. Then the loss of
load probability ($\LOLP$) would be simply the probability that the
instantaneous output of the generator drops below the load, which one
could potentially calculate via meteorological data. However,
introducing storage changes the picture dramatically. Even while the
instantaneous output of the generator may drop to a low level, there
may well be charge left in the battery to meet the load requirements,
and thus, using the bundle, the load could still be met. But the
battery is charged by the excess output of the renewable generator,
whereby the charge in the battery at any time depends on the history
of generation (and load) realized until that time. It is easy to see
that characterization of the $\LOLP$ in this case is a nontrivial
matter.

This paper develops an analytical framework for characterizing the $\LOLP$ of a battery-renewable generator bundle. Our framework yields crisp answers to the sizing questions raised above. 
For a target level of reliability, it provides order-optimal estimates of the minimum battery size one requires to meet that reliability level, in terms of the statistical properties of the renewable source and the load. It also reveals that there are hard impossibilities: for certain ranges of these statistical parameters, no amount of battery suffices to bring the $\LOLP$ to zero. These results could be applied in conjunction with a costing exercise to ascertain the right battery size to be bundled with a renewable generator. One could also potentially use our characterization of the steady-state $\LOLP$ within a larger calculation of network-level reliability.

We model the \textit{net generation}, i.e., the renewable generation
minus the demand, as a continuous time Markov chain evolving over a
finite state space. The battery serves as a buffer of finite capacity
that is charged at the available rate when the net generation is
positive and is discharged at the deficit rate when the net generation
is negative. The battery charging process is subject to `boundary
conditions': it cannot be charged above its capacity and cannot be
discharged below zero. Any positive net generation produced when the
battery is fully charged is unusable. The $\LOLP$ is then the long run
fraction of time the battery is discharged to zero.
We find that when the \textit{drift}, i.e., the steady-state average
net generation, is negative, then a battery of any finite size results
in an $\LOLP$ that remains bounded away from zero. In other words, the
$\LOLP$ cannot be made arbitrarily small by choosing a large enough
battery size when the drift is negative.
However, when the drift is positive, the $\LOLP$ drops exponentially
with battery size, allowing it to be made arbitrarily close to zero by
choosing a suitably large battery size. The rate of decrease of
$\LOLP$ with increase in battery size is dictated by a large
deviations decay rate, which can also be characterized as the smallest
positive generalized eigenvalue of the rate matrix associated with the
net generation. This decay rate characterization can in turn be used
to estimate the battery size required to achieve a given target
$\LOLP.$

This paper is organized as follows. In Section~\ref{sec:model}, we
develop the mathematical model for the renewable source, load and the
battery. In Section~\ref{sec:bounds}, we characterize the asymptotics
of the $\LOLP$ as the battery size scales to infinity. This serves as
the basis of our sizing estimates. We then do a case study in
Section~\ref{case study} where these results are tested and validated
on real data of wind generation.

\section{Model and Preliminaries}\label{sec:model}


Consider a storage battery of capacity $b_{\max}$ which is charged or
discharged by a \textit{net generation} process associated with rate
$r(t)=g(t)-d(t)$, where $g(t)$ and $d(t)$ denote, respectively, the
generation and demand at time $t$. The energy content of the battery,
denoted by $b(t),$ evolves as a regulated process having upper cap
$b_{\max}$ and lower cap $0$. Thus, $b(t)$ evolves as
\begin{equation}
\frac{d}{dt}b(t) = 
    \begin{cases}
      
      0           & \text{if }b(t) = 0 \text{and} r(t)<0,\\ 
      0           & \text{if }b(t) = b_{\max} \text{and} r(t)>0,\\
      r(t)    & \text{ otherwise. }
    \end{cases} \label{eq:bdot}
\end{equation}
Note that a fully charged battery cannot be charged further with a rate
$r(t)>0$. Similarly an empty battery cannot be discharged further with a rate
$r(t)<0$. Excluding these two boundary cases, the rate of change of
the battery level is governed by the net generation rate~$r(t)$.  We
assume that the rate $r(t)$
is dependent on the state of a background Markov process,
which collectively captures supply (generation) side variability as
well as demand side variability.

Let $\{X(t)\}$ denote the background Markov process. We assume that
$\{X(t)\}_{t\geq 0}$ is an irreducible, time-reversible,
continuous-time Markov chain (CTMC) over a finite state space $S$.
For every state $i\in S$, we associate a \textit{net generation rate}
$r_i\in \mathbb{R}\setminus\{0\}$ with which the battery is
charged or discharged. Thus, $r(t):=r_{X(t)},$ i.e., the rate of
charging/discharging of the battery is a function of the state of the
background CTMC $\{X(t)\}_{t\geq 0}.$
It is easy to see now that $\{(b(t),X(t))\}$ is a Markov process that
evolves over the state space $[0,b_{\max}]\times S$. Note that this
model also captures charge/discharge rate constraints on the battery;
these would simply be reflected in the range of values taken by the
net generation rates $r_i.$

The above mathematical model, wherein the occupancy of a buffer (or
battery) is modulated by a background Markov process, is referred to
in the queueing literature as a Markov Modulated Fluid Queue (MMFQ);
see \cite{anick1982stochastic, mitra1988stochastic}. In this paper, we
use a finite-buffer MMFQ model to analyse the reliability of a
renewable generator bundled with a battery.

Next, we describe how to characterize the invariant distribution of
the Markov process $\{(b(t),X(t))\},$ which then leads to a
characterization of the loss of load probability ($\LOLP).$ Note that
we are assuming that the process $X(t)$ has no state~$i$ where the net
generation rate is zero.
This allows us to partition the state space $S$ as follows:~$S=S_+\cup
S_-$, where
\begin{equation*}
    S_+=\{i\in S:r_i>0\}, \quad S_-=\{i\in S:r_i<0\}.
\end{equation*} 
We assume that both $S_+$ and $S_-$ are non-empty.\footnote{Indeed,
  if either $S_+$ or $S_-$ is empty, then the battery would forever
  remain completely charged or completely discharged.}

Let $(b,X)$ denote the steady state of the Markov process
$\{(b(t),X(t))\}.$ We capture the invariant distribution of this
process as follows:
\begin{equation*}
  F_i(x)=\mathbb{P}[b\leq x,X=i]\quad \forall\ i \in
  S,\ x\in[0,b_{\max}].
\end{equation*}
The invariant distribution is governed by the ODE 
\begin{equation}
  \frac{d}{dx}{F}(x) = R^{-1}Q\t F(x), 
  \label{eq:fdot} 
\end{equation}
where $F(\cdot)=[F_1(\cdot), F_2(\cdot),\hdots, F_{|S|}(\cdot)]\t,$
$Q$ denotes the transition rate matrix associated with the CTMC
$\{X(t)\},$ and $R:=\diag(r_1,r_2,\hdots,r_{|S|})$ (see
\cite{anick1982stochastic, mitra1988stochastic}).\footnote{Since $r_i
  \neq 0$ for all $i \in S,$ $R^{-1}$ exists.}
The invariant distribution can now be computed using the following
boundary conditions:
\begin{equation}
    F_i(0)=0\ \forall\ i\in S_+;\ F_i(b_{\max})=\pi_i\ \forall\ i\in
    S_-\label{eq:boundary_conditions},
\end{equation}
where $\pi = (\pi_i,i \in S)$ denotes the invariant distribution of
the CTMC $\{X(t)\}.$

The probability that the battery content is less than or equal to $x$
in steady state is given by $\sum_{i\in S} F_i(x)$.  This probability
is of particular relevance for $x=0$. Indeed, the quantity $\sum_{i\in
  S} F_i(0) = \sum_{i\in S_-} F_i(0)$ is the long run fraction of time
the battery is empty, and is also the long run fraction of time that
the demand remains unfulfilled. In other words, this is the loss of
load probability ($\LOLP$), i.e.,
\begin{equation*}
    \LOLP=\sum_{i\in S_-} F_i(0) \label{eq:LOLP_eqn}.
\end{equation*}

The $\LOLP,$ which can only be expressed in closed form for very
simple cases (see below), can be computed numerically by solving the
ODE~\eqref{eq:fdot} using the boundary
conditions~\eqref{eq:boundary_conditions}. However, this computation
does not provide insights into the structural dependence of the
$\LOLP$ on the supply-side and demand-side uncertainty (captured by the
CTMC $\{X(t)\}$) and the capacity $b_{\max}$ of the battery. In
Section~\ref{sec:bounds}, we analyse the large battery asymptotics of
the $\LOLP,$ which sheds light on the limits of reliability achievable
in a given setting, as well as the storage capacity required to
achieve a certain (small) $\LOLP$ target.



Finally, we define a quantity that plays a key role in the large
battery asymptotics, namely the \emph{drift} associated with the
supplyside and demandside uncertainty. The drift $\Delta$ is defined
as the steady state average net generation, i.e.,
\begin{equation*}
  \Delta:=\sum_{i\in S} \pi_i r_i.
  \label{eq:drift}
\end{equation*}
Note that $\Delta < 0$ (respectively, $\Delta > 0$) implies that the
time-average generation is less than (respectively, greater than) the
time-average demand.

We conclude this section by considering the special case where the
background CTMC $\{X(t)\}$ has only two states. This simple scenario,
which admits a closed form characterization of the $\LOLP,$ motivates
the general large buffer asymptotics derived in
Section~\ref{sec:bounds}.

\subsection{Two state example}\label{sec:two state example}

Consider the special case $S=\{1,2\}$, where the generation alternates
between two values $0$ and $g > 0$ while the demand takes a constant
value $d \in (0,g)$. Specifically, we set $r_1 = -d,$ $r_2 = g-d.$ In
this case,
\begin{equation*}
   Q=\begin{bmatrix} 
    -a & a \\
    b & -b 
    \end{bmatrix}
\quad
    R=\begin{bmatrix} 
    -d & 0 \\
    0 & g-d 
    \end{bmatrix},
\end{equation*}
where $a,b > 0$ are the state transition rates for the generation process.

In this case, the drift is given by $\Delta=\frac{ag-ad-bd}{a+b},$ and
the $\LOLP$ can be shown to be 
\begin{equation*}
    \LOLP =\frac{-\frac{\Delta}{d}} { 1 - \frac{ag-ad}{bd}\exp\left(\frac{(a+b)\Delta}{(g-d)d}b_{\max} \right)}.
\end{equation*}
It is easy to see that $\LOLP$ is a strictly decreasing function
of~$b_{\max}.$ However, the limiting behavior of the $\LOLP$ as
$b_{\max} \ra \infty$ depends critically on whether the drift is
positive or negative. When $\Delta<0$, then
\begin{equation*}
  \LOLP
  \xrightarrow{b_{\max}\uparrow\infty} -\frac{\Delta}{d} >
  0 \label{eq:LOLP2states}.
\end{equation*}
This means that the $\LOLP$ remains bounded away from zero for any
finite $b_{max}.$ In other words, when the drift is negative, an
$\LOLP$ less than $-\frac{\Delta}{d}$ is simply unattainable no matter
how large the battery capacity. This is consistent with
Theorem~\ref{thm:LOLP_lower_bound} in Section~\ref{sec:bounds}, which
establishes a positive lower bound on the $\LOLP$ for any battery size
$b_{\max}$ when the drift is negative.

On the other hand, when $\Delta > 0,$
\begin{align}
  \LOLP & \sim A e^{ -\lambda b_{\max}},
  \nonumber
\end{align}
where $A=\frac{b}{a}\frac{(a+b)}{(g-d)}\Delta$ and $\lambda =
\left(\frac{a}{d}-\frac{b}{g-d}\right) > 0.$\footnote{We use $f(t)
  \sim g(t)$ to mean that $\lim_{t \ra \infty} \frac{f(t)}{g(t)} =
  1.$} This implies that when the drift is positive, the $\LOLP$
decays exponentially with the battery size, implying that an
arbitrarily small $\LOLP$ target is achievable with a large enough
battery. Moreover, we note that the decay rate $\lambda$ is in fact
the positive eigenvalue of $R^{-1}Q\t$. This is consistent with
Theorem \ref{thm:LOLP_upper_bound} in Section~\ref{sec:bounds}, which
establishes an exponential decay (in the battery size) of the $\LOLP$
when the drift is positive.

\section{Large Battery Approximations}
\label{sec:bounds}

In this section, we analyse the behavior of the $\LOLP$ as the battery
size $b_{\max}$ scales to infinity. Our results shed light on the
feasibility of meeting reliability targets, and also guide the sizing
of the battery required to meet a given reliability target.


As suggested by the two-state example in Section~\ref{sec:model}, the
asymptotic behavior of the $\LOLP$ as $b_{\max} \ra \infty$ depends on
whether the drift is positive or negative. Accordingly, we consider
these cases separately.

\subsection{Negative drift: Asymptotic $\LOLP$ lower bound}\label{sec:neg_drift_lower_bound}

We now consider the case $\Delta < 0,$ i.e., the time-average
generation is less than the time-average demand. One would therefore
expect that $\LOLP$ cannot be made arbitrarily small in this
case. This is proved formally in Theorem~\ref{thm:LOLP_lower_bound},
which also provides a lower bound on the $\LOLP$ that is achievable
with any finite battery size.

Let $\underline{r}:= \min\{r_i,\ i=1,\hdots, |S|\}$. Note that $\underline{r} <
0,$ since we assume that $S_-$ is non-empty; $|\underline{r}|$ is simply
the maximum rate of discharge of the battery.
\begin{theorem}
  \label{thm:LOLP_lower_bound}
  If $\Delta <0,$ then $\LOLP > \frac{-\Delta}{-\underline{r}}$ for any
  value of $b_{\max}.$ Moreover, $$\lim_{b_{\max}\to \infty}\LOLP \geq
  \frac{-\Delta}{-\underline{r}},$$ with equality if $|S_-|=1.$
\end{theorem}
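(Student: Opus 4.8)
The plan is to base everything on a single flow-balance identity obtained from the stationary ODE~\eqref{eq:fdot}. Writing the ODE as $R\,\frac{d}{dx}F(x) = Q^{\top}F(x)$ and integrating over the interior $(0,b_{\max})$, then premultiplying by $\mathbf{1}^{\top}$ and using $\mathbf{1}^{\top}Q^{\top} = (Q\mathbf{1})^{\top} = 0$, I obtain $\sum_{i\in S}r_i\big(F_i(b_{\max}^{-}) - F_i(0)\big) = 0$. Since a discharging state cannot carry an atom at the top and a charging state cannot carry an atom at the bottom, I substitute $F_i(0)=0$ for $i\in S_+$ and write $F_i(b_{\max}^{-}) = \pi_i - p_i$ with $p_i := \mathbb{P}[b=b_{\max},X=i]\ge 0$ (and $p_i=0$ for $i\in S_-$). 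This yields the identity
\[
\sum_{i\in S_-}(-r_i)\,F_i(0) \;=\; -\Delta \;+\; \sum_{i\in S_+} r_i\,p_i .
\]
The same identity follows more probabilistically from $\mathbb{E}\big[\tfrac{d}{dt}b\big]=0$ in steady state combined with the dynamics~\eqref{eq:bdot}.

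Next I convert this into the bound. Because $-r_i \le -\underline{r}$ for every $i\in S_-$ and $\LOLP = \sum_{i\in S_-}F_i(0)$, the left-hand side is at most $(-\underline{r})\,\LOLP$; since $r_i>0$ and $p_i\ge 0$, the last sum on the right is nonnegative. Hence $(-\underline{r})\,\LOLP \ge -\Delta$, i.e.\ $\LOLP \ge \frac{-\Delta}{-\underline{r}}$ for every $b_{\max}$. For the strict inequality at finite $b_{\max}$ I argue that $\sum_{i\in S_+}r_i p_i>0$: for a finite buffer the content reaches $b_{\max}$ with positive probability and is held there for a positive duration whenever it is charging, so some $p_i$ with $i\in S_+$ is strictly positive. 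Passing to the limit in $\LOLP \ge \frac{-\Delta}{-\underline{r}}$ gives $\liminf_{b_{\max}\to\infty}\LOLP \ge \frac{-\Delta}{-\underline{r}}$, and the limit itself exists since $\LOLP$ is nonincreasing in $b_{\max}$ and bounded below.

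For the equality claim under $|S_-|=1$, let $j$ be the unique discharging state; then $\underline{r}=r_j$ and the inequality $\sum_{i\in S_-}(-r_i)F_i(0)\le(-\underline{r})\LOLP$ holds with equality, so the identity becomes $(-\underline{r})\,\LOLP = -\Delta + \sum_{i\in S_+}r_i\,p_i(b_{\max})$. Thus the equality $\lim_{b_{\max}\to\infty}\LOLP = \frac{-\Delta}{-\underline{r}}$ is \emph{equivalent} to showing that the total top-boundary mass vanishes, $\sum_{i\in S_+}r_i\,p_i(b_{\max})\to 0$, or equivalently $\mathbb{P}[b=b_{\max}]\to 0$, as $b_{\max}\to\infty$.

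The main obstacle is precisely this last step, and it is where $\Delta<0$ is essential. My plan is to compare the finite-buffer battery with the infinite-buffer version driven by the same modulating chain $\{X(t)\}$. When $\Delta<0$ the infinite-buffer queue is stable and its stationary content $b^{(\infty)}$ has a proper distribution, so $\mathbb{P}[b^{(\infty)}\ge L]\to 0$ as $L\to\infty$ (standard MMFQ stability, cf.\ \cite{anick1982stochastic,mitra1988stochastic}). A monotone coupling gives $b^{(b_{\max})}(t)\le b^{(\infty)}(t)$ pathwise, so the event $\{b^{(b_{\max})}=b_{\max}\}$ is contained in $\{b^{(\infty)}\ge b_{\max}\}$ and the top mass is bounded by $\mathbb{P}[b^{(\infty)}\ge b_{\max}]\to 0$. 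Equivalently, one may use the duality $r_i\mapsto -r_i$, $b\mapsto b_{\max}-b$, which sends the full event to the empty event of a queue with positive drift $-\Delta$, whose empty probability vanishes with buffer size. Making this coupling/convergence rigorous — in particular verifying the pathwise ordering under two-sided versus one-sided reflection — is the delicate part; once it is in hand, the flow-balance identity makes the rest elementary.
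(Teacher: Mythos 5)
Your proof is correct and is, in substance, the same argument the paper gives. Your flow-balance identity $\sum_{i\in S_-}(-r_i)F_i(0) = -\Delta + \sum_{i\in S_+} r_i\,p_i$ is exactly the paper's Lemma~\ref{lem:Lemma_neg_drift} ($\ell_{avg} = -\Delta + \mathcal{O}_{avg}$, with $\ell_{avg}=\sum_{i\in S_-}(-r_i)F_i(0)$ the lost-load rate and $\mathcal{O}_{avg}=\sum_{i\in S_+}r_i p_i$ the overflow rate); the only real difference is how it is obtained. You integrate the stationary ODE~\eqref{eq:fdot} and use $Q\mathbf{1}=0$ together with the boundary atoms, whereas the paper writes a pathwise energy-conservation identity and invokes the law of large numbers for Markov chains plus $b(t)/t\to 0$. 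Your route is arguably a little cleaner (no ergodic theorem needed, and the boundary atoms $p_i$ appear explicitly), but it presupposes that the stationary distribution has the stated density/atom structure, which the sample-path argument does not. From the identity onward the two proofs coincide: bound the lost-load rate by $(-\underline{r})\,\LOLP$, note the overflow term is nonnegative (strictly positive for finite $b_{\max}$, giving the strict inequality), and kill it in the limit via the monotone comparison $\mathbb{P}[b=b_{\max}]\le \mathbb{P}[b_\infty\ge b_{\max}]\to 0$ under $\Delta<0$ --- the step you flag as delicate is precisely the one the paper also takes, citing \cite{mitra1988stochastic} for the tail decay of the infinite-buffer content.
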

Theorem \ref{thm:LOLP_lower_bound} is a consequence of the law of
large numbers for Markov chains. It states that when the steady state
average demand exceeds the steady state average generation, then an
$\LOLP$ less than or equal to $\nicefrac{-\Delta}{-\underline{r}}$ is
unattainable no matter how large a battery we deploy. Moreover, this
bound is loose in general; it is tight when the background CTMC has
only a single state of discharge. The proof of
Theorem~\ref{thm:LOLP_lower_bound} can be found in
Appendix~\ref{app:lower_bound}.


\textit{Connection with the two state example:} In the two-state
example considered in Section~\ref{sec:model}, note that $\underline{r} =
-d$ and $|S_-| = 1.$ In this example, when $\Delta < 0,$ recall that
indeed, $\LOLP > \frac{-\Delta}{d},$ with $\lim_{b_{\max} \ra \infty}
\LOLP = \frac{-\Delta}{d}.$

\ignore{
Tightness of lower bound is clear from the two state
example. With average drift $\Delta=\frac{ag-ad-bd}{a+b}$ and $
\underline{r}=-d$, $\LOLP$ lower bound obtained in Theorem
\ref{thm:LOLP_lower_bound} is
\begin{equation*}
    \LOLP_{lower\ bound}= \frac{ad+bd-ag}{(a+b)d}=-\frac{\Delta}{d}.
\end{equation*}
This lower bound value is same as the asymptotic $\LOLP$ obtained in
\eqref{eq:LOLP2states} for large battery size ($b_{\max}\uparrow
\infty$). Thus $\LOLP$ lower bound specified by
\Cref{thm:LOLP_lower_bound} becomes tight in the two state example as
$b_{\max}\uparrow \infty$ and $\LOLP$ cannot be reduced further no
matter how large battery size we choose.}

\subsection{Positive drift: $LOLP$ asymptotics}

We now consider the case $\Delta > 0,$ i.e., the time-average
generation exceeds the time-average demand. In this case, one might
expect that it is possible, with a large enough battery, to store the
excess generation when the instantaneous generation exceeds demand,
and to use this stored energy to almost always fulfil the deficit when
the instantaneous generation drops below the
demand. Theorem~\ref{thm:LOLP_upper_bound} shows that this is indeed
the case, and that the $\LOLP$ decays exponentially with the battery
size (when the drift is positive). Moreover,
Theorem~\ref{thm:LOLP_upper_bound} provides two characterizations of
this exponential rate of decay: one from large deviations theory, and
the other as the smallest positive eigenvalue of $R\inv Q\t.$

We now introduce some preliminaries required to state our large
deviations decay rate characterization (Theorem~\ref{thm:ldp}). We
uniformize (see \cite{Kijima1997} for background on uniformization of
CTMCs) the background Markov process $X(\cdot)$ such that the outgoing
rate out of each state equals $$q > \max_{1 \leq i \leq |S|}
-Q_{i,i};$$ recall that $Q$ denotes the rate matrix corresponding to
$X(\cdot).$\footnote{The $i,j$th entry of a matrix $M$ is denoted as
  $M_{i,j}.$} Let $\{Y_k\}$ denote the sequence of intervals between
state transitions in this uniformized chain; note that~$\{Y_k\}$ is an
i.i.d. sequence of $\mathrm{Exp}(q)$ random
variables.\footnote{$\mathrm{Exp}(q)$ refers to the exponential
  distribution with mean $1/q$.}  Let~$\{Z_k\}$ denote the embedded
Markov chain 
corresponding to the (uniformized) Markov process $X(\cdot).$
$\{Z_k\}$ is now a time homogeneous discrete time Markov chain (DTMC);
we denote by $P$ the transition probability matrix corresponding to
this DTMC. We make the following observations.
\begin{enumerate}
\item The DTMC $\{Z_k\}$ is independent of the sequence $\{Y_k\}.$
\item $\pi,$ which denotes the invariant distribution corresponding to
  the background Markov process $X(\cdot),$ is also the invariant
  distribution corresponding to the embedded DTMC $\{Z_k\}.$
\end{enumerate}
Define $U_0 = 0,$ $$U_k := \sum_{j=1}^k -r({Z_j}) Y_j \quad (k \geq
1).$$ The process $\{\nicefrac{U_k}{k}\}$ satisfies a large deviations
principle (this follows from the Gartner-Ellis conditions
\cite{Dembo1998}; see Appendix~\ref{app:positive_drift}), with a rate
function that is defined in terms of the following function.
\begin{equation*}
  \label{eq:Lambda}
 \Lambda(\theta):=\lim_{k \ra \infty} \frac{\log \Exp{e^{\theta
       U_k}}}{k}.
\end{equation*}
That $\Lambda(\cdot)$ is well defined, i.e., the limit in the above
definition exists as an extended real number for all $\theta,$ is
shown in Lemma~\ref{lemma:Lambda_PF} in
Appendix~\ref{app:positive_drift}. 

\begin{theorem}
  \label{thm:LOLP_upper_bound}
  If $\Delta >0,$ then
  \label{thm:ldp}
  $$\lim_{b_{\max} \ra \infty} \frac{\log \LOLP}{b_{\max}} = -
  \lambda,$$ where
  \begin{equation}
  \label{eq:decay_rate}
  \lambda := \sup\{\theta > 0\ :\
  \Lambda(\theta) < 0\}  \in (0,\infty).  
\end{equation}
Moreover, $\lambda$ also equals the smallest positive eigenvalue of
$R\inv Q\t.$
\end{theorem}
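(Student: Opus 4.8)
The plan is to establish the two assertions in turn: first the decay rate $\lim_{b_{\max}\to\infty}\frac{\log\LOLP}{b_{\max}}=-\lambda$ with $\lambda=\sup\{\theta>0:\Lambda(\theta)<0\}$, and then the identification of $\lambda$ with the smallest positive eigenvalue of $R\inv Q\t$. I would begin by recasting the empty-battery event through the reflection $\hat b(t):=b_{\max}-b(t)$, which obeys the dynamics of a fluid queue driven by the net rate $-r(X(t))$, with the empty state $b=0$ becoming the full state $\hat b=b_{\max}$. Since $\Delta>0$, this dual queue has negative drift $-\Delta$, and its infinite-buffer stationary content admits the Loynes/Reich representation $\sup_{t\ge 0}\int_0^t -r(X^{\mathrm{rev}}(s))\,ds$. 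Here time-reversibility of $\{X(t)\}$ gives $X^{\mathrm{rev}}\stackrel{d}{=}X$, and evaluating the integral over the uniformized jump chain (whose sojourns are the i.i.d.\ $\{Y_k\}$ and whose states are $\{Z_k\}$) collapses the continuous supremum to $\sup_{k\ge 0}U_k$, since the partial integral is monotone within each sojourn. I would then argue that the finite-buffer empty probability and the tail $\mathbb{P}[\sup_{k}U_k\ge b_{\max}]$ agree to exponential order, so that the decay rate of $\LOLP$ equals that of this supremum tail.

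Next I would invoke the large deviations principle. Because $\{U_k\}$ has negative mean increment $\Lambda'(0)=\Exp{-r(Z)Y}=-\Delta/q<0$ under stationarity, and $\Lambda$ is well defined (Lemma~\ref{lemma:Lambda_PF}) and convex with $\Lambda(0)=0$, the Gärtner--Ellis principle for $\{U_k/k\}$ yields the Cram\'er-type estimate $\lim_{b\to\infty}\frac1b\log\mathbb{P}[\sup_k U_k\ge b]=-\lambda$, where $\lambda$ is the positive root of $\Lambda$. Convexity, together with the fact that $\Lambda(\theta)\to+\infty$ as $\theta$ approaches its first pole (coming from the $S_-$ states), forces $\Lambda$ to dip below zero and then cross it exactly once, so this root exists in $(0,\infty)$ and coincides with $\sup\{\theta>0:\Lambda(\theta)<0\}$. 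Combined with the reduction above, this gives the first claim.

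For the spectral identification I would compute $\Lambda$ explicitly. Conditioning on $\{Z_k\}$ and integrating out the independent $\mathrm{Exp}(q)$ variables gives $\Exp{e^{\theta U_k}}=\pi\t M(\theta)^k\mathbf{1}$ with $M(\theta)_{ij}=P_{ij}\,\frac{q}{q+\theta r_j}$, so $\Lambda(\theta)=\log\rho(M(\theta))$, the log of the Perron root. The critical condition $\Lambda(\lambda)=0$ reads $\rho(M(\lambda))=1$, i.e.\ $M(\lambda)u=u$ for the positive Perron eigenvector $u$. Substituting the uniformization identity $P=I+Q/q$ and the change of variable $w_i=\frac{q}{q+\lambda r_i}u_i$ collapses this to $Qw=\lambda Rw$, i.e.\ $R\inv Q\,w=\lambda w$, exhibiting $\lambda$ as an eigenvalue of $R\inv Q$ with positive eigenvector $w$. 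Time-reversibility ($\pi_iQ_{ij}=\pi_jQ_{ji}$) then yields $R\inv Q\t=\diag(\pi)\,(R\inv Q)\,\diag(\pi)^{-1}$, since $R\inv$ and $\diag(\pi)$ commute, so $R\inv Q$ and $R\inv Q\t$ are similar and share $\lambda$. Minimality follows from the first-crossing structure: as $\theta$ increases from $0$, $\rho(M(\theta))$ falls below $1$ and returns to $1$ for the first time precisely at $\theta=\lambda$, and the Perron-selected crossing is the smallest positive eigenvalue.

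I expect the reduction of the first paragraph to be the crux. Making the duality rigorous, and in particular showing that the \emph{finite}-buffer empty probability shares the exponential decay rate of the infinite-buffer supremum tail, requires matching subexponential upper and lower bounds (the lower bound via an explicit large-deviation path that depletes the battery, the upper bound via stochastic domination of the finite-buffer content by the infinite-buffer content). A secondary subtlety in the spectral step is verifying that $\lambda$ lies strictly below the first pole $\min_{i\in S_-} q/|r_i|$ of the $\{Y_k\}$ moment generating function, so that $w>0$ and the Perron argument genuinely selects the smallest positive eigenvalue of $R\inv Q\t$ rather than a spurious pole.
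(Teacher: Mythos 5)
Your proposal is correct, and its skeleton coincides with the paper's: both pass to the dual queue with rates $-r$ (so that $\LOLP=\prob{b=0}$ becomes the full-buffer probability of a negative-drift queue), control that by the tail of the infinite-buffer stationary content, extract the decay rate from a large-deviations analysis of the Markov-additive sums $U_k$, and defer the genuinely delicate step --- that the \emph{finite}-buffer probability attains the same exponent rather than merely being upper-bounded by it --- to standard queueing large-deviations results (the paper cites Toomey and Section~6.5 of \cite{BigQueues}; you sketch the matching lower bound via an explicit depleting path). You diverge in two supporting computations. First, you reach $\sup_k U_k$ via the Loynes representation and time-reversibility, whereas the paper samples at uniformization epochs, invokes PASTA, and applies the LD theorem to the Lindley recursion \eqref{disc_Lindley}; these are interchangeable. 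Second, and more substantively, you prove the eigenvalue identification directly: $\rho(M(\lambda))=1$ is converted by $P=I+Q/q$ and $w_i=\frac{q}{q+\lambda r_i}u_i$ into $Qw=\lambda Rw$ with $w>0$, minimality follows because $\rho(M(\theta))<1$ on $(0,\lambda)$ excludes any generalized eigenvalue there, and reversibility gives the similarity $R\inv Q\t=\diag(\pi)\,(R\inv Q)\,\diag(\pi)\inv$. The paper instead imports the spectral characterization wholesale from Mitra's exact asymptotics \eqref{eq:MitraAsymptotics} and equates the two expressions for the decay rate; your route is more self-contained and makes explicit a link the paper leaves implicit. Two small notes: your $M(\theta)_{ij}=P_{ij}\frac{q}{q+\theta r_j}$ attributes the sojourn rate to the destination state while the paper's Lemma~\ref{lemma:Lambda_PF} uses the source state $r_l$ --- the two matrices are similar, so nothing changes --- and the verification you flag, that $\lambda$ lies strictly inside $\left(0,\nicefrac{-q}{\underline{r}}\right)$ so that $w>0$, is exactly what the steepness property \eqref{eq:steep} delivers.
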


Theorem~\ref{thm:LOLP_upper_bound} states that the $\LOLP$ decays
exponentially with respect to the battery size with decay rate
$\lambda.$ 
This ensures that any arbitrarily small $\LOLP$ target be achieved
with a suitably large battery. Additionally,
Theorem~\ref{thm:LOLP_upper_bound} provides an explicit
characterization of this exponential rate, which can in turn be used
to estimate of the battery size required in order to meet a given
(small) $\LOLP$ target; we address battery sizing in detail as part of
our case study (see Section~\ref{case study}).

\textit{Connection with the two state example:} Recall that in the two
state example considered in Section~\ref{sec:model}, we saw that when
$\Delta > 0,$ $\LOLP \sim A e^{-\lambda_c b_{\max}},$ where
$\lambda_c$ is the only positive eigenvalue of $R^{-1}Q\t.$

\subsection*{Proof of Theorem~\ref{thm:LOLP_upper_bound}}

We analyse the large buffer asymptotics of the $\LOLP$ via
the \emph{reversed} system \cite{mitra1988stochastic}, which is obtained by
interchanging the role of generation and demand. Thus, $Q^r = Q,$ and
$R^r = -R,$ where we use the superscript $r$ to represent quantities
in the reversed system. Moreover, $\Delta^r = -\Delta.$ Since the
original system is associated with a positive drift ($\Delta > 0$),
the reversed system is associated with negative drift ($\Delta^r <
0$).

The $\LOLP$ associated with the original system is captured in the
reversed system as follows.
\begin{equation}
\label{eq:lolp_reversed_system}
\LOLP = \prob{b = 0} \stackrel{(\star)}= \prob{b^r = b_{\max}} \stackrel{(\star \star)}\leq \prob{b^r_{\infty} \geq b_{\max}}.
\end{equation}
Here, $b^r_{\infty}$ denotes the stationary buffer occupancy in the
reversed system with an infinite buffer. Note that $\prob{b^r_{\infty}
  \geq b_{\max}}$ is well defined since $\Delta^r < 0.$ The equality
$(\star)$ in \eqref{eq:lolp_reversed_system}, which states that the
long run fraction of time the battery is empty in the original system
equals the long run fraction of time the battery is full in the
reversed system, was first shown in \cite{mitra1988stochastic}. The
inequality $(\star\star)$ follows from a straighforward sample path
argument; by coupling the background process between the finite and
infinite buffer systems, taking $b^r(0) = b^r_{\infty}(0),$ it is not
hard to show that $b^r(t) \leq b^r_{\infty}(t)$ for all $t > 0.$

The asymptotics of $\prob{b^r_{\infty} \geq b_{\max}}$ have been
established via a direct analysis of the invariant distribution of the
process $(X(t),b^r_{\infty}(t))$ in \cite{mitra1988stochastic}:
\begin{equation}
\label{eq:MitraAsymptotics}
 \prob{b^r_{\infty} \geq b_{\max}} \sim A e^{-\lambda b_{\max}},
\end{equation}
where $A > 0$ and $\lambda$ is the smallest positive eigenvalue of
$R\inv Q\t.$

In light of \eqref{eq:lolp_reversed_system}
and \eqref{eq:MitraAsymptotics}, it suffices to prove the following
statements.
\begin{lemma} Let $b^r_\infty$ be the steady state battery occupancy
  level of the infinite battery $(b_{\max}=\infty)$ of the reversed
  system described above.
\label{lemma:ldp_inf_buffer}
$$\lim_{b_{\max} \ra \infty} \frac{\log \prob{b^r_{\infty} \geq
b_{\max}}}{b_{\max}} = -\sup\{\theta > 0\ :\ \Lambda(\theta) < 0\}.$$
\end{lemma}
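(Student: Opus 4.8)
The plan is to identify $b^r_\infty$ with the all-time maximum of the Markov-additive random walk $\{U_k\}$ and then to read off its exponential decay rate from the Cram\'er--Lundberg estimate. First I would invoke the Loynes/Reich representation for the infinite-buffer fluid queue: running the stationary, time-reversed background process from an empty buffer, the stationary content of the reversed system satisfies $b^r_\infty \eqdist \sup_{t\ge 0}\int_0^t (-r_{X(s)})\,ds$, where time-reversibility of $\{X(t)\}$ lets me replace the reversed path by a forward stationary one. Embedding the integral at the jump epochs $T_k=\sum_{j=1}^k Y_j$ of the uniformized chain gives $\int_0^{T_k}(-r_{X(s)})\,ds = U_k$, so the continuous supremum and the discrete maximum $M:=\sup_{k\ge 0}U_k$ differ by at most the net input accrued over the single inter-jump interval containing the maximizer --- a light-tailed ($\mathrm{Exp}(q)$-scale) quantity. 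Consequently $\frac1{b}\log\prob{b^r_\infty \ge b}$ and $\frac1{b}\log\prob{M \ge b}$ share the same limit, and it suffices to show $\lim_{b\ra\infty}\frac1b\log\prob{M\ge b} = -\theta^{*}$ with $\theta^{*}:=\sup\{\theta>0:\Lambda(\theta)<0\}$.

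Before estimating the tail I would record the geometry of $\Lambda$. Since the original drift is positive, the reversed increments have negative mean, $\Exp{U_k}/k \ra -\Delta/q<0$, so $\Lambda'(0)=-\Delta/q<0$ while $\Lambda(0)=0$; moreover $\Lambda$ is convex (Lemma~\ref{lemma:Lambda_PF}) and blows up to $+\infty$ at the right edge of its finite domain, the latter forced by the exponential up-steps contributed by the states in $S_-$. Hence $\Lambda$ has a unique positive root, which is exactly $\theta^{*}$, and $\Lambda'(\theta^{*})>0$. In particular $M<\infty$ almost surely, so the tail in question is genuinely exponential.

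For the upper bound I would use the Perron--Frobenius data of Lemma~\ref{lemma:Lambda_PF}: let $\widetilde P(\theta)$ be the exponentially tilted kernel whose largest eigenvalue is $e^{\Lambda(\theta)}$, with strictly positive right eigenvector $h$ at $\theta=\theta^{*}$ (where this eigenvalue equals $1$). Then $W_k := e^{\theta^{*}U_k}\,h(Z_k)/h(Z_0)$ is a nonnegative martingale, and optional stopping at $\tau_b:=\inf\{k:U_k\ge b\}$ together with $U_{\tau_b}\ge b$ gives $\prob{M\ge b}=\prob{\tau_b<\infty}\le \big(\max_i h(i)/\min_i h(i)\big)\,e^{-\theta^{*}b}$, whence $\limsup_{b}\frac1b\log\prob{M\ge b}\le -\theta^{*}$. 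For the matching lower bound I would pass to the twisted measure $\widetilde{\mathbb P}$ under which $\{Z_k\}$ evolves by the stochastic $h$-transform $\check P_{ij}=\widetilde P(\theta^{*})_{ij}\,h(j)/h(i)$; since $\Lambda'(\theta^{*})>0$, under $\widetilde{\mathbb P}$ the walk has positive drift and reaches level $b$ with probability bounded below. Writing the likelihood ratio back, $\prob{M\ge b}=\widetilde{\mathbb E}\big[e^{-\theta^{*}U_{\tau_b}}\,h(Z_0)/h(Z_{\tau_b})\,\mathbf 1_{\tau_b<\infty}\big]\ge c\,e^{-\theta^{*}b}\,\widetilde{\mathbb P}[\,U_{\tau_b}\le b+K\,]$, and tightness of the overshoot $U_{\tau_b}-b$ (from the exponential tails of the increments) keeps the last probability bounded away from $0$, giving $\liminf_{b}\frac1b\log\prob{M\ge b}\ge -\theta^{*}$.

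The main obstacle is exactly the passage from the single-time LDP for $U_k/k$ to the tail of the all-time maximum $M$: the bare large-deviations principle controls each fixed $k$ but not the union over all $k$, so the upper bound must lean on the uniform-in-$k$ control furnished by the Perron--Frobenius eigenvector, and the lower bound on the change-of-measure and overshoot estimate. The other delicate point is the first step --- rigorously justifying the supremum representation of $b^r_\infty$ and checking that both the continuous-to-discrete discrepancy and the first-passage overshoot are negligible on the exponential scale. Once these are in hand, combining the two bounds yields $\lim_{b\ra\infty}\frac1b\log\prob{b^r_\infty\ge b}=-\theta^{*}=-\sup\{\theta>0:\Lambda(\theta)<0\}$, as claimed.
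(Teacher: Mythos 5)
Your proposal is correct, but it follows a genuinely different route from the paper. The paper's proof is short by design: it samples the infinite-buffer reversed system at the uniformization epochs, invokes PASTA to identify the continuous-time stationary law with the embedded one, observes that the sampled process satisfies a Lindley recursion $b^r_{\infty}[n+1]=(b^r_{\infty}[n]-r_{X_n}Y_n)_+$ with negative drift, and then cites the standard large-deviations theorem for the supremum of a random walk with Markov-modulated increments (Theorem~3.1 of the ``Big Queues'' reference), so that the only substantive work left is verifying the G\"artner--Ellis conditions for $\Lambda$ --- which is exactly Lemma~\ref{lemma:Lambda_PF}. You instead reconstruct that cited theorem from scratch: you pass to the all-time maximum via the Loynes/Reich representation (legitimate here because the paper assumes $\{X(t)\}$ is time-reversible, and within each sojourn the content is linear so the continuous supremum is in fact attained at a jump epoch), then prove the upper bound with the exponential martingale $e^{\theta^* U_k}h(Z_k)/h(Z_0)$ built from the Perron--Frobenius eigenvector of $M(\theta^*)$ and optional stopping, and the lower bound with the $h$-transformed change of measure plus overshoot tightness. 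Both arguments lean on the same Lemma~\ref{lemma:Lambda_PF} (convexity, $\Lambda'(0)<0$, steepness at the edges of the domain, hence a unique positive root $\theta^*$ equal to $\sup\{\theta>0:\Lambda(\theta)<0\}$). What your route buys is self-containedness and explicit non-asymptotic bounds (the prefactor $\max_i h(i)/\min_i h(i)$ in the upper bound); what the paper's route buys is brevity and a cleaner separation between the model-specific computation (the form of $\Lambda$) and the generic queueing LDP. The one point worth tightening in a write-up is the justification that $\Lambda'(\theta^*)>0$ (it follows from convexity together with $\Lambda<0$ on $(0,\theta^*)$ and differentiability), since the positive drift under the tilted measure is what makes $\widetilde{\mathbb P}[\tau_b<\infty]=1$ and the overshoot argument go through.
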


\begin{lemma} Let $b^r$ be the steady state battery occupancy level of
  the finite battery of the reversed system described above.
\label{lemma:ldp_finite_buffer}
$$\lim_{b_{\max} \ra \infty} \frac{\log \prob{b^r =
    b_{\max}}}{b_{\max}} = -\sup\{\theta > 0\ :\ \Lambda(\theta) <
0\}.$$
\end{lemma}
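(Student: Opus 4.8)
The plan is to prove matching exponential bounds and read off the rate from the infinite-buffer result, \Cref{lemma:ldp_inf_buffer}. Throughout write $\lambda := \sup\{\theta > 0 : \Lambda(\theta) < 0\}$, so the claim is $\lim_{b_{\max}\ra\infty} b_{\max}^{-1}\log\prob{b^r = b_{\max}} = -\lambda$, with $\lambda \in (0,\infty)$ by \eqref{eq:decay_rate}.

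\emph{Upper bound.} This direction is immediate: the coupling inequality $(\star\star)$ already invoked in \eqref{eq:lolp_reversed_system} gives $\prob{b^r = b_{\max}} \le \prob{b^r_\infty \ge b_{\max}}$ for every $b_{\max}$, so taking logarithms, dividing by $b_{\max}$, and applying \Cref{lemma:ldp_inf_buffer} yields $\limsup_{b_{\max}\ra\infty} b_{\max}^{-1}\log\prob{b^r = b_{\max}} \le -\lambda$.

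\emph{Lower bound.} Here I would argue by renewal-reward. Regenerate the finite-buffer reversed process at the successive epochs where the buffer empties while the background chain enters a fixed reference state, and let $C$ be the mean cycle length (finite and independent of $b_{\max}$) and $T_{\mathrm{full}}$ the time spent at level $b_{\max}$ during a cycle, so that $\prob{b^r = b_{\max}} = \Ebb[T_{\mathrm{full}}]/\Ebb[C]$. Since a full buffer sits in a state of nonnegative reversed net-generation rate for at least its (exponential) residual holding time, $\Ebb[T_{\mathrm{full}}] \ge \delta\,p(b_{\max})$ for a constant $\delta>0$, where $p(b_{\max})$ is the probability that $b^r$ reaches $b_{\max}$ within a cycle. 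Coupling a fresh infinite-buffer copy from the empty state over a single cycle, the finite and infinite processes agree until the cap first binds, so $b^r$ reaches $b_{\max}$ within the cycle exactly when that excursion of the free, singly-reflected process does; by the Loynes/reflection representation underlying \Cref{lemma:ldp_inf_buffer}, this is governed by the partial sums $\{U_k\}$, giving $p(b_{\max}) = \prob{\max_{0\le k\le N} U_k \ge b_{\max}}$ with $N$ the number of transitions in the cycle.

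\emph{Main obstacle.} The crux is to show this single-cycle maximum decays no faster than $e^{-\lambda b_{\max}}$. We know from \Cref{lemma:ldp_inf_buffer} --- equivalently, from the Gartner--Ellis large-deviations bounds for $\{U_k/k\}$ together with \eqref{eq:decay_rate} --- that the all-time supremum satisfies $b_{\max}^{-1}\log\prob{\sup_{k\ge 0} U_k \ge b_{\max}} \ra -\lambda$. Transferring this to one regeneration cycle needs the large-deviations \emph{lower} bound rather than a soft monotonicity argument: decomposing $\sup_{k\ge 0}U_k$ across the i.i.d. cycles shows that the all-time tail exceeds $b_{\max}$ only through a number of cycles that is bounded in expectation uniformly in $b_{\max}$, so $\prob{\sup_{k\ge 0}U_k \ge b_{\max}}$ and the per-cycle maximum $p(b_{\max})$ share the exponential rate $\lambda$. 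Substituting back through the renewal-reward identity gives $\liminf_{b_{\max}\ra\infty} b_{\max}^{-1}\log\prob{b^r = b_{\max}} \ge -\lambda$, which together with the upper bound proves the lemma.
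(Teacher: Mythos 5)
Your upper bound is correct and needs nothing beyond the coupling inequality $(\star\star)$ of \eqref{eq:lolp_reversed_system} together with \Cref{lemma:ldp_inf_buffer}. For the lower bound, your regenerative/cycle-maximum strategy is the right family of techniques: it is essentially what the sources the paper leans on (\cite{Toomey98} and \cite[Section~6.5]{BigQueues}) carry out, whereas the paper itself proves nothing here and simply cites them. The renewal--reward identity $\prob{b^r = b_{\max}} = \Ebb[T_{\mathrm{full}}]/\Ebb[C]$ and the bound $\Ebb[T_{\mathrm{full}}] \ge \delta\, p(b_{\max})$ (via the memoryless residual holding time of the charging state occupied when the cap is first hit) are sound, modulo two small imprecisions: $\Ebb[C]$ is not independent of $b_{\max}$, since the capped dynamics depend on it, though it is uniformly bounded because the buffer content is pathwise monotone in $b_{\max}$; and within a cycle the process reflected at $0$ dominates the free walk, so $p(b_{\max}) \ge \prob{\max_{0\le k\le N} U_k \ge b_{\max}}$ rather than equality --- fortunately the direction you need.

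The genuine gap is the step you yourself flag as the main obstacle: showing $\liminf_{b_{\max}\ra\infty} b_{\max}^{-1}\log p(b_{\max}) \ge -\lambda$. This is the entire content of the lower bound, and the justification offered does not establish it. The cycle maxima $M_j$ are i.i.d.\ with $\prob{M_j \ge b_{\max}} = p(b_{\max}) > 0$, so over the infinitely many regeneration cycles some cycle maximum exceeds $b_{\max}$ almost surely; hence $\{\sup_{k\ge 0} U_k \ge b_{\max}\}$ is \emph{not} a union of boundedly many per-cycle events, and "a number of cycles bounded in expectation uniformly in $b_{\max}$" cannot be extracted from a union bound over cycles. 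The correct accounting must track the running minimum of the unreflected walk: the $j$-th cycle can contribute to $\{\sup_k U_k \ge b_{\max}\}$ only by climbing $b_{\max}$ \emph{plus} the accumulated descent of the previous cycles, and controlling that sum is precisely the ladder-height analysis you are implicitly assuming. What is actually needed is one of: (i) a first-passage/ladder decomposition giving $\prob{\sup_k U_k \ge b_{\max}} \le K\, p(b_{\max})$ with $K$ free of $b_{\max}$; (ii) a direct G\"artner--Ellis \emph{lower} bound on the probability that $U$ reaches level $b_{\max}$ within $O(b_{\max})$ steps before the cycle can terminate; or (iii) the renewal--reward identity run in the other direction, $\prob{b^r_\infty \ge b_{\max}} \le \Ebb[C\,\indicator_{\{M_1\ge b_{\max}\}}]/\Ebb[C]$, combined with subexponential control of $\Ebb[C \mid M_1 \ge b_{\max}]$ (e.g.\ via exponential moments of the cycle length). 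None of these is carried out, so as written the lemma's lower bound is asserted rather than proved.
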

Lemmas~\ref{lemma:ldp_inf_buffer} and~\ref{lemma:ldp_finite_buffer}
are proved in Appendix~\ref{app:inf_buffer_ld} and
Appendix~\ref{app:finite_buffer_ld} respectively, using large
deviations arguments.

\section{Case Study}
\label{case study}

In this section, we demonstrate the applicability of the results
presented in Section~\ref{sec:bounds} in practice. We fit a Markov
model to a real-world trace of wind power generation, allowing us to
validate the predictions from our analytical results against empirical
observations. Further, we address the question of battery sizing in
order to meet a given reliability target.


\subsection{Data collection}
We collected time series data corresponding to three years of wind
power generation (December 2014 to December 2017) within the
jurisdiction of the Bonneville Power Administration (BPA) (see
\cite{url_wind_data}). The data samples are five minutes apart, and
range from 0 to 4500 MW.

As expected, the data is highly non-stationary in nature, exhibiting
diurnal as well as seasonal variations. Since our Markov modeling is
best suited to stationary data, we extracted the samples corresponding
to the months of February and March from 9PM to 3AM for fitting a
Markov model; this restricted dataset is henceforth referred to as the
`stationary wind data'. For comparison, we also fit a Markov model to
the entire (highly non-stationary) time series.

\ignore{time series data. It is the total wind power collected from
  wind farms in Bonneville Power Administration (BPA) controlled area
  in USA for the period of December 2014 to December 2017. The data
  range is from 0 to 4500 kilowatt (MW). Every successive data sample
  is five minutes apart in time. The data can be found at
  \cite{url_wind_data}.  This raw wind data is inherently
  non-stationary in nature. We verified this by taking time series
  average over a month for a particular time of a day and found that
  the average changes across seasons and months. The data samples from
  9PM to 3AM in February and March are found to be stationary. We
  separated out this stationary wind data and worked on these two data
  sets, referring to them as \lq{}non-stationary wind data' and
  \lq{}stationary wind data'.}

\subsection{Data processing and Markov modeling}

We now describe how we fit a Markov model to the above wind
data.\footnote{This has been attempted before by several authors,
  including \cite{brokish2009pitfalls, nfaoui2004stochastic,
    shamshad2005first, papaefthymiou2008mcmc}. However, these prior
  works evalaute the `fit' quality of their Markov models using the
  mean and auto-correlation function. In contrast, we match the
  reliability implied by the Markov model against the empirical
  reliability, which is a more direct indicator of the usefulness of
  the model.}  We first quantize the data into $N = 20$ bins, the bin
edges being (in MW): [0, 60, 120, 180, 240,300, 450, 600, 900, 1200,
  1500, 1800, 2100, 2400, 2700, 3000, 3300, 3600, 3900, 4200,
  4500]. This non-uniform binning is done to ensure a roughly even
distribution of samples across bins. The $N$ bins constitute the state
space for our Markov model.

\ignore{ These 20 bins forms a state space over which the wind process
  modelled as Markov process will evolve. We found that there are more
  samples in the lower part of the data range (0 to 300MW). Therefore
  such non-uniform binning is deliberately used so that an
  approximately flat histogram across bin sizes can be obtained.}

Given this state space, we obtain the empirical transition probability
matrix $T$ as follows:
\begin{equation*}
 T[i,j]= \frac{\text{\# transitions occurring from bin $i$ to bin
     $j$}}{\text{total \# transitions occurring out of bin $i$}}
\end{equation*}
$T$ is the maximum likelihood estimator of the transition probability
matrix corresponding to a discrete-time Markov chain (DTMC) model for
the wind power sampled at $\tau = 5\ \mathrm{min}$ intervals. To
obtain a continuous-time Markov chain (CTMC) description, we note that
the transition rate matrix $Q$ of the CTMC is related to $T$ as
follows: $T=e^{Q\tau}$. Using the first-order Taylor series
approximation for small $\tau$, we get $e^{Q\tau}\approx I+Q\tau$,
where $I$ is the identity matrix.\footnote{This Taylor approximation
  is valid to long as $\tau$ is smaller than the typical transition
  times of the CTMC.}  Accordingly, we set $Q=(T-I)/\tau$. This $Q$
matrix defines a CTMC description of the wind power data.

To define the net generation corresponding to each state, we assume a
constant demand $d$ over time. Thus, the net generation rate $r_i$
corresponding to bin~$i$ equals $g_i - d,$ where~$g_i$ denotes the
bin-center corresponding to bin~$i.$ Note that we can control the
drift $\Delta$ by varying $d.$

\ignore{We assume that the consumer demand process has no variability
  in it and there is a fixed energy demand $d$ from the consumer
  side. We fix the battery size to a suitable value $b_{\max}$. Wind
  power output minus demand $d$ is the \textit{net generation} which
  is the rate at which a battery of fixed capacity $b_{\max}$ is
  charged or discharged. Charging and discharging obeys the dynamics
  given by \eqref{eq:bdot}. Each state in the state space is
  associated with the above rate value. These rate values form a
  diagonal matrix $R$ associated with the process as defined in
  \Cref{sec:model}. The wind process enters into a particular state,
  spends some time in it and transits to the next state. For the
  available time series wind data and for the DTMC, we assume that the
  process spends five minutes in each state before transitioning to
  the next state whereas for CTMC this time is random and exponential
  distributed. Whenever the process changes its state we keep the
  track of the time during which battery remains empty.
  transitioning to another state.  The cumulative time for which the
  battery is empty is the time for which the load lost since a
  constant load demand $d$ is always present. $\LOLP$ is then the long
  run fraction of time for which the battery is empty. We refer to the
  $\LOLP$ associated with the actual wind data, DTMC and CTMC as
  \textit{actual} $\LOLP$, \textit{synthetic} $\LOLP$ and
  \textit{simulated} $\LOLP$, respectively. Note that for any of the
  time series data (actual wind data, DTMC, CTMC) the associated
  $\LOLP$ is for a specified battery size $b_{\max}$.}

\subsection{Evaluating the goodness of fit}

We now evaluate the quality of our Markov models by comparing the
$\LOLP$ implied by these models with the empirical $\LOLP$ implied by
the data. This also allows us to demonstrate the applicability of the
conclusions of Theorems~\ref{thm:LOLP_lower_bound}
and~\ref{thm:LOLP_upper_bound} in practice.  In
Figures~\ref{fig:neg_drift_LOLP_plots}
and~\ref{fig:pos_drift_LOLP_plots}, we plot the $\LOLP$ as a function
of the battery size $b_{\max}$ setting $d =$ 1800~MW ($\Delta < 0$)
and $d =$ 1200~MW ($\Delta > 0$), respectively. We do this for the
`stationary wind data' as well as the entire time
series. Specifically, we plot the following quantities:
\begin{itemize}
  \item \emph{Simulated (cont. time) $\LOLP$:} This is the $\LOLP$
    computed by simulating the CTMC model for wind power generation
    obtained from the data.
  \item \emph{Simulated (discrete time) $\LOLP$:} This is the $\LOLP$
    computed by simulating the DTMC model for wind power generation
    obtained from the data, taking the generation to be constant over
    5 minute intervals.
  \item \emph{Empirical $\LOLP$:} This is the $\LOLP$ computed by
    simulating the battery evolution using the wind power generation
    trace, again assuming the generation to be constant over 5 minute
    intervals.
\end{itemize}

\ignore{
 A suitable demand $d$ is chosen to create a negative or positive
 drift scenario.  The DTMC is simulated with transition probability
 matrix $T$ for a sufficiently long time.  Each transition in the DTMC
 occurs after every five minutes.  While simulating a CTMC, an
 exponential random variable is used to find the state transition
 instances. At each state transition instance, the next random state
 of the CTMC is found using the rate matrix $Q$.  For a fixed battery
 size $b_{\max}$ and for a chosen wind process (DTMC, CTMC or the
 given wind data), $\LOLP$ is calculated as the long run fraction of
 time for which the battery is empty. We repeat the above $\LOLP$
 calculations for different battery sizes to find how $\LOLP$ is
 structurally dependent on the battery size. This entire analysis is
 done for both sets of data: \lq{}non-stationary wind data' and
 \lq{}stationary wind data'. Results are shown in
 \Cref{fig:neg_drift_LOLP_plots}, \Cref{fig:pos_drift_LOLP_plots},
 \Cref{fig:logLOLP}.
}

\begin{figure}[t]
     \centering \includegraphics[scale=0.5,trim={.5cm 0 0
         0}]{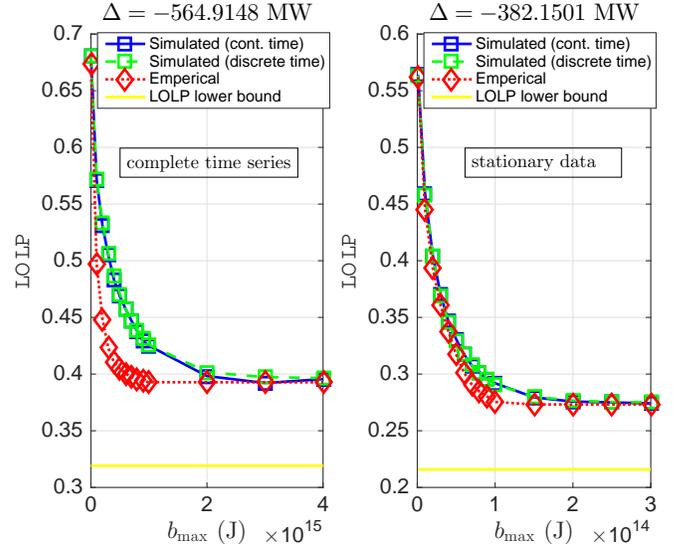}
       \caption{$\LOLP$ vs battery size for $d =$ 1800 MW ($\Delta <
         0$)}
       \label{fig:neg_drift_LOLP_plots}
 \end{figure}

Note that in all the plots, the simlulated $\LOLP$ from our CTMC model
closely matches the simulated $\LOLP$ from the DTMC model. This
essentially validates our first order Taylor approximation for fitting
the transition rate matrix $Q$ from the empirical transition
probability matrix $T.$ Moreover, we note that the simulated $\LOLP$
from the Markov models more closely matches the empirical $\LOLP$ for
the stationary wind data than for the entire time series. This
suggests that the Markov models are a better fit on the stationary
data than on the complete, highly non-stationary time series. In
practice, this means we should fit different Markov models to capture
wind variability in different parts of the day in each season.

Focusing specifically on Figure~\ref{fig:neg_drift_LOLP_plots}, which
corresponds to the negative drift scenario, we make the following
observations.
\begin{itemize}
\item The empirical as well as simulated $\LOLP$ converges, as
  $b_{\max}$ becomes large, to a value which is lower bounded by the
  bound specified in \Cref{thm:LOLP_lower_bound}.
\item The empirical $\LOLP$ is less than the $\LOLP$ implied by the
  Markov models. In other words, our models tend to overestimate the
  $\LOLP.$
\item The $\LOLP$ corresponding to a given battery size\footnote{We
    plot battery size in SI units (Joules). However, the engineering
    practice is to measure battery capacity in kiloWatt-hour (kWh),
    where 1 kWh $=3.6\times 10^6$ J. 
}
  is greater for the entire time series as compared to the stationary
  data, suggesting that the former dataset is more `variable' than the
  latter.
\end{itemize}

Focusing next on Figure~\ref{fig:pos_drift_LOLP_plots}, which
corresponds to the positive drift scenario, we note that the $\LOLP$
decays to zero as $b_{\max}$ becomes large, consistent with
Theorem~\ref{thm:LOLP_upper_bound}. Moreover, we see that the Markov
models tend to overestimate the $\LOLP$ (as before). To illustrate the
exponential decay of $\LOLP$ with battery size clearly, we plot the
simulated $\LOLP$ from the CTMC model on a log-linear scale in
\Cref{fig:logLOLP}. Note that the plot looks asymptotically linear
(establishing the exponential decay), with a slope that closely
matches the decay rate from Theorem~\ref{thm:LOLP_upper_bound}. 

\ignore{
\begin{itemize}
    \item In all the cases, $\LOLP$ decreases as the battery size is
      increased. This is intuitive since increasing storage size will
      reduce the occurrence of the event that the battery is empty
      with an unserved demand $d$.\footnote{Battery size depicts the
        amount of energy it can store which is measured in SI units of
        Joules. To measure the battery size in kiloWatt-hour (kWh) we
        use the fact that 1 kWh $=3.6\times 10^6$ J. A typical
        electric car battery size is 10-100 kWh whereas a typical
        storage installation capacity of a utility is around 100-150
        MWh
        \cite{url_US_battery_storage_market_trend}\cite{url_Electric_vehicle_battery}.}
    \item $\LOLP$ associated with the DTMC and CTMC, namely
      \textit{synthetic} $\LOLP$ and \textit{simulated} $\LOLP$
      respectively, matches well in both positive and negative drift
      scenarios.
    \item When the drift is negative ($\Delta <0$)
      (\Cref{fig:neg_drift_LOLP_plots}),
      \begin{itemize}
          \item all three $\LOLP$ plots (\textit{actual, synthetic}
            and \textit{simulated}) converge to a value which is lower
            bounded by the bound specified in
            \Cref{thm:LOLP_lower_bound}. This is true for both
            stationary and non-stationary wind data.
          \item there is a gap between converging value and the bound
            specified by \Cref{thm:LOLP_lower_bound}. This gaps
            becomes zero or the bound becomes tight only when the
            modelled process $X(t)$ has a single state associated with
            a negative rate.
          \item $\LOLP$ plots converge at a faster rate in case of
            stationary wind data as opposed to non-stationary wind
            data.
          \item for a small battery size $b_{\max}$, \textit{actual}
            $\LOLP$ is less than \textit{simulated} and
            \textit{synthetic} $\LOLP$.  This indicates that the
            Markov modelling is slightly overestimating the $\LOLP$
            than the \textit{actual} $\LOLP$ which is inherent in the
            wind data. This difference between the estimated $\LOLP$
            and \textit{actual} $\LOLP$ is less in the case of
            stationary wind data and more in the case of
            non-stationary wind data.
      \end{itemize}
    
    \item When the drift is positive ($\Delta >0$)
      (\Cref{fig:pos_drift_LOLP_plots}),
        \begin{itemize}
            \item for large battery size $b_{\max}$, $\LOLP$
              approaches zero. Recall that this is in contrast with
              the negative drift, where $\LOLP$ is always lower
              bounded by a positive value
              (\Cref{fig:neg_drift_LOLP_plots}).
            \item rate of convergence of $\LOLP$ to its limiting value
              is slower as compared to the case where the drift is
              negative (\Cref{fig:neg_drift_LOLP_plots}). $\LOLP$ in
              the stationary wind data decays faster as compared to
              $\LOLP$ in the non-stationary data.
            \item from \Cref{thm:LOLP_upper_bound}, the magnitude of
              asymptotic slope of $\log\LOLP$ vs battery size
              $b_{\max}$ should match with the smallest positive
              eigenvalue of $R^{-1}Q^T.$ For large battery size, this
              match is found to be fairly close both in stationary as
              well as iin non-stationary wind data
              (\Cref{fig:logLOLP}).
        \end{itemize} 
      
\end{itemize}
}

\begin{figure}[t]
  \centering
  \includegraphics[scale=0.5,trim={1cm 0 0 0}]{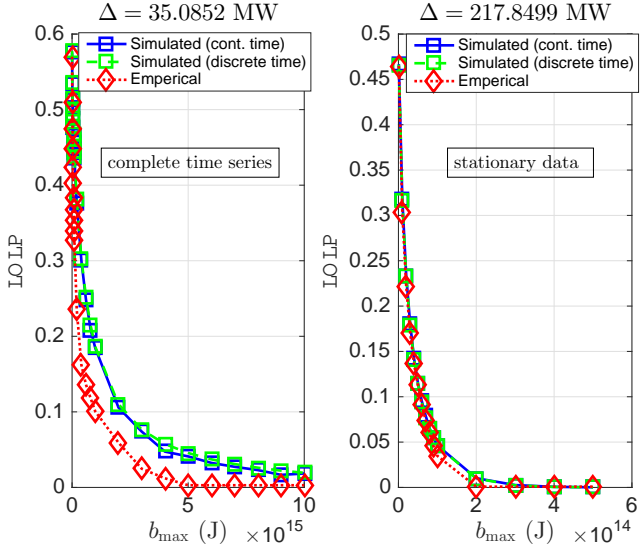}
  \caption{$\LOLP$ vs battery size for $d =$ 1200 MW ($\Delta > 0$)}
  \label{fig:pos_drift_LOLP_plots}
\end{figure}

\subsection{Battery sizing}

The above results support our claim 
that when $\Delta > 0,$ the $\LOLP$ decays exponentially with battery
size with a decay rate equal to $\lambda.$ In other words, when
$b_{\max}$ is large, the $\LOLP$ may be approximated as
\begin{equation}
  \label{eq:lolp_approx}
  \LOLP \approx c  e^{-\lambda b_{\max}}. 
\end{equation}
This further implies that the battery size required to maintain the
$\LOLP$ at $\delta$ is given by 
\begin{equation*}
  b_{\max} \approx \frac{\log(c)}{\lambda} + \frac{\log(1/\delta)}{\lambda}. 
\end{equation*}
Since the pre-factor $c$ in \eqref{eq:lolp_approx} is unknown here, a
natural approximation would be to estimate the battery size required
as
\begin{equation}
  \label{eq:battery_approx}
  b_{\max} \approx \frac{\log(1/\delta)}{\lambda}. 
\end{equation}
Clearly, we would expect the above estimate to be accurate upto an
additive offset. Moreover, we would expect that the error of our
estimate would be small in relative terms for small~$\delta.$

To validate \eqref{eq:battery_approx}, consider the CTMC model for the
stationary wind data, with $d = 1200$ MW. For this model, we compare
the minimum storage size required to bring the simulated $\LOLP$ below
$\delta$ with the estimate \eqref{eq:battery_approx}; see the left
panel of Figure~\ref{fig:BatterySizing2}. Notice the constant offset
between the two curves, as predicted. However, we note the (unknown)
offset results in a roughly 40\% error in battery size requirement
when $\delta = 10^{-3}.$ For lower values of $\delta,$ the relative
error would of course be smaller. This means that for moderate values
of reliability target $\delta,$ the estimate \eqref{eq:battery_approx}
can be used to make ballpark estimates of the storage size required.

However, \eqref{eq:lolp_approx} can also be used for \emph{relative}
storage sizing as follows: Note that \eqref{eq:lolp_approx} suggests
that shrinking the $\LOLP$ be a factor of $\epsilon$ would require an
increase in battery size of $\frac{\log(1/\epsilon)}{\lambda}.$ To
validate this approximation, we consider the following baseline
scenario. Setting $d = 1200$ MW with the stationary wind data, and
$b_{\max} = 0.25\times 10^{12}$ J, the simulated $\LOLP$ equals $L =
0.018.$ In the right panel of \Cref{fig:BatterySizing2}, we plot the
\emph{additional} battery size required to make the $\LOLP$
$L/\epsilon$ versus $\epsilon,$ using the above approximation, as well
as by simulating the CTMC model. Note that the approximation is
remarkably accurate, even for moderate values of $\LOLP.$

This shows that \eqref{eq:lolp_approx} is an accurate description of
the $\LOLP$ as $b_{\max}$ becomes large, and can be used in practice
to guide battery sizing decisions.

 \begin{figure}[t]
     \centering
     \includegraphics[scale=0.5,trim={1cm 0 0 0}]{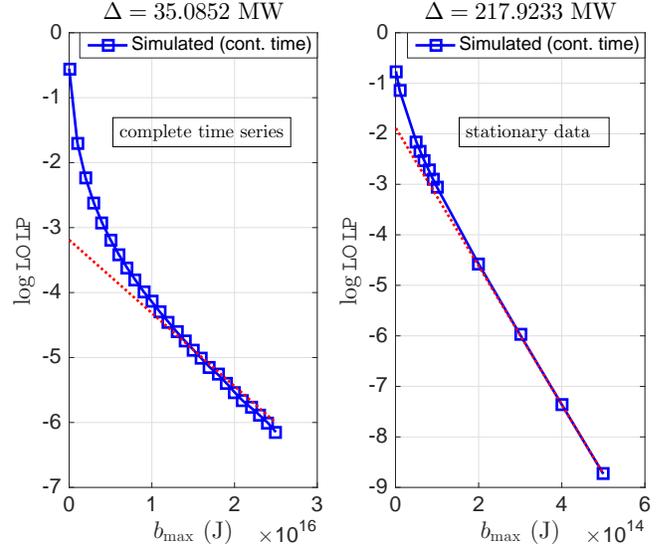}
     \caption{Simulated $\LOLP$ vs battery size plot on $\log$-linear
       scale for for $d =$ 1200 MW (positive drift). The dotted red
       line has slope $-\lambda.$}
     \label{fig:logLOLP}
 \end{figure}

 \begin{figure}
 \includegraphics[scale=0.5]{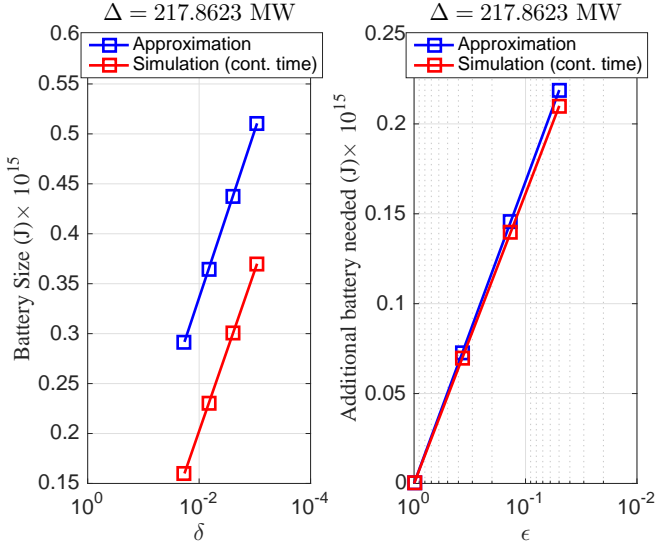}
 \caption{Validation of battery sizing approximations}
 \label{fig:BatterySizing2}
 \end{figure}

\section{Concluding Remarks}

In this paper, we developed an analytical framework for characterizing
the reliability of a renewable generator bundled with a battery. We
analysed how the reliability, captured by the $\LOLP,$ scales as the
battery size increases. Our results highlight the achievable
limits of reliability, and provide useful guidelines for sizing
storage in practice.

While we have used $\LOLP$ as the reliability metric throughout this
paper, it should be noted that our conclusions extend readily to
another related metric, i.e., \emph{lost load rate} ($\mathsf{LLR}$),
which is defined as the long run rate of unserved load,
i.e., $$\mathsf{LLR} = \sum_{i \in S_{-}} F_i(0) (-r_i).$$ Since there
exists positive constants $c_1$ and $c_2$ such that $$c_1 \LOLP \leq
\mathsf{LLR} \leq c_2 \LOLP,$$ our asymptotic characterizations of
$\LOLP$ as $b_{\max} \ra \infty$ extend readily to $\mathsf{LLR}.$
Indeed, when $\Delta < 0,$ $\mathsf{LLR} > -\Delta.$ When $\Delta >
0,$ $\mathsf{LLR}$ decays exponentially with $b_{\max}$ with the same
decay rate $\lambda$ as the one characterized in
Theorem~\ref{thm:LOLP_upper_bound} for $\LOLP.$

This work motivates future research along several directions. 
We believe our formulations are a natural first step to
analyse the economies of scale that would result from sharing of
storage, between renewable generators or electricity
prosumers; we show a result along these lines in~\cite{nair2019statistical}. 
Another direction is performing a similar reliability calculation with for a 
network of generators, taking into account transmission constraints. 
Finally, we note that our work motivates more sound
stochastic modeling of renewable generation, to improve the real-world
applicability of analytical reliability characterizations (as in the
present paper).

\ignore{ To summarize, we have developed an analytical framework in
  which we can integrate any generation and consumer process with its
  variability along with a battery storage. We came up with Markov
  modelling to capture this variability. Loss of load probability
  ($\LOLP$) is an important parameter in deciding the reliability in
  such systems. We have obtained an analytical closed form expression
  of $\LOLP$ for a simplified two state example using this modelling.

We proved that $\LOLP$ is always lower bounded and can not be further
reduced when the background process comprising of supply (generation)
and demand is associated with a negative drift. In other words, 100\%
reliable system cannot be achieved with the negative drift no matter
how large storage size we choose. We also proved that the lower bound
of $\LOLP$ is tight when state space $S$ of background process $X(t)$
consists of a single state which has negative rate associated with
it. We found that this $\LOLP$ lower bound is indeed tight in the case
of two states example.

For positive drift, we proved that $\LOLP$ is upper bounded and can be
made arbitrarily small by choosing suitably large storage size. We
found an upper bound on its decay rate. For the two states example the
exact $\LOLP$ decay rate was obtained.

A case study validation with three years of wind data is done for the
modelling methodology we proposed. The actual inherent $\LOLP$ in wind
data and $\LOLP$ obtained via simulation of Markov modelling
(continuous time and discrete time) matches closely. The structural
dependence of $\LOLP$ on battery size, observed in simulations,
matches with the analytical results.
}

\bibliographystyle{IEEEtran}
\bibliography{IEEEabrv,references}

\appendices

\section{Proof of \Cref{thm:LOLP_lower_bound}}
\label{app:lower_bound}

The proof of \Cref{thm:LOLP_lower_bound} is based on energy
conservation of the battery content together with the law of large
numbers applied to the background Markov process.  
\ignore{
We use a following preliminary result in the proof.
\begin{lemma}
  \label{lem:battery_content_relation_infinite_and_finite_capacity}

  Let $b(t)$ and $b_{\infty}(t)$ be finite capacity and infinite
  capacity battery content respectively at any instant $t$. Then,
  $b(t)\leq b_{\infty}(t)\ \forall\ t\geq 0$ on each sample path.
\end{lemma}
The proof of this lemma is fairly simple which is based on the sample path argument and hence, we omit the proof.
}
%
Let $\mathcal{O}_{cum}(t)$ be the total amount of wasted energy due to
battery overflow in the interval $[0,t].$
Similarly, let $\ell_{cum}(t)$ be the total amount of unserved energy
demand in the same interval (during loss of load).
Let $d_{cum}(t)$ be the net demand served over the interval
$[0,t]$ and $g_{cum}(t)$ be total generation over the same
interval. Finally, define
\begin{equation*}
    \ell_{avg}:=\lim_{t \to \infty} \frac{\ell_{cum}(t)}{t}\ \ \ \ \mathcal{O}_{avg}:=\lim_{t \to \infty} \frac{\mathcal{O}_{cum}(t)}{t}.
\end{equation*}
Recall that $\underline{r}:= \min\{r_i,\ i=1,\hdots, |S|\}.$ Let
$\bar{r}:= \max\{r_i,\ i=1,\hdots, |S|\}$. With these notations we
have the following result.

\begin{lemma}
\label{lem:Lemma_neg_drift}
$\ell_{avg}=-\Delta + \mathcal{O}_{avg} > -\Delta$. If the drift is
negative, i.e., $\Delta < 0,$ then $\lim_{b_{\max} \to \infty}
\ell_{avg}=-\Delta$.
\end{lemma}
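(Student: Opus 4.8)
The plan is to derive the identity $\ell_{avg}=-\Delta+\mathcal{O}_{avg}$ from an exact energy balance of the battery content over $[0,t]$, and then to treat the strict inequality and the limit by analysing the overflow term $\mathcal{O}_{avg}$ on its own. First I would rewrite the dynamics~\eqref{eq:bdot} in the equivalent form
\begin{equation*}
\frac{d}{dt}b(t) = r(t) - o(t) + u(t),
\end{equation*}
where $o(t):=r(t)\,\mathbf{1}\{b(t)=b_{\max},\,r(t)>0\}\ge 0$ is the instantaneous overflow rate and $u(t):=-r(t)\,\mathbf{1}\{b(t)=0,\,r(t)<0\}\ge 0$ is the instantaneous loss-of-load rate, so that $\mathcal{O}_{cum}(t)=\int_0^t o(s)\,ds$ and $\ell_{cum}(t)=\int_0^t u(s)\,ds$. (One checks the three cases of~\eqref{eq:bdot} reproduce this form; this is exactly the statement that generation is either stored, served, or wasted.) Integrating over $[0,t]$ yields the exact balance
\begin{equation*}
\ell_{cum}(t)-\mathcal{O}_{cum}(t) = \big(b(t)-b(0)\big)-\int_0^t r(s)\,ds.
\end{equation*}

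Dividing by $t$ and sending $t\to\infty$, the term $(b(t)-b(0))/t\to 0$ since $b(\cdot)$ is confined to the compact interval $[0,b_{\max}]$, while the law of large numbers for the irreducible, positive recurrent CTMC $\{X(t)\}$ gives $\tfrac1t\int_0^t r_{X(s)}\,ds\to\sum_{i\in S}\pi_i r_i=\Delta$ almost surely. Because $(b(t),X(t))$ is a positive recurrent Markov process with a unique stationary law, the ergodic theorem guarantees that the time averages defining $\ell_{avg}$ and $\mathcal{O}_{avg}$ exist a.s.\ and equal the corresponding stationary expectations; passing to the limit then gives $\ell_{avg}-\mathcal{O}_{avg}=-\Delta$.

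For the strict inequality it remains to show $\mathcal{O}_{avg}>0$ for every finite $b_{\max}$. I would establish this by a regeneration/excursion argument. Fix a state $i^{*}\in S_+$ (nonempty by assumption) with exit rate $\nu_{i^{*}}:=-Q_{i^{*},i^{*}}>0$. By irreducibility and positive recurrence the chain returns to $i^{*}$ at a strictly positive long-run rate, and on each visit the holding time, being $\mathrm{Exp}(\nu_{i^{*}})$ and hence of unbounded support, exceeds $b_{\max}/r_{i^{*}}$ with probability $e^{-\nu_{i^{*}}b_{\max}/r_{i^{*}}}>0$. On any such excursion the battery is driven up to $b_{\max}$ and then overflows for the remainder of the holding time, \emph{regardless} of the level at which the visit began. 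Thus the expected overflow accrued per regeneration cycle is strictly positive, and the renewal–reward theorem gives $\mathcal{O}_{avg}>0$, whence $\ell_{avg}=-\Delta+\mathcal{O}_{avg}>-\Delta$.

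Finally, for the limit under $\Delta<0$ it suffices to show $\mathcal{O}_{avg}\to 0$ as $b_{\max}\to\infty$. I would couple the finite-buffer content $b(\cdot)$ with the infinite-buffer content $b_\infty(\cdot)$ driven by the same background path from the same initial level; the sample-path monotonicity $b(t)\le b_\infty(t)$ (the analogue of the domination used to justify $(\star\star)$ in~\eqref{eq:lolp_reversed_system}) shows overflow can occur only while $b_\infty(t)\ge b_{\max}$, so that, with $\bar{r}>0$,
\begin{equation*}
\mathcal{O}_{avg}\le \bar{r}\,\mathbb{P}[\,b_\infty\ge b_{\max}\,].
\end{equation*}
Since $\Delta<0$, the infinite-buffer fluid queue has negative drift and is stable, so $b_\infty$ is a.s.\ finite and $\mathbb{P}[b_\infty\ge b_{\max}]\to 0$ as $b_{\max}\to\infty$; hence $\mathcal{O}_{avg}\to 0$ and $\ell_{avg}\to-\Delta$. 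The main obstacle is precisely the strict positivity $\mathcal{O}_{avg}>0$: the energy balance by itself only delivers $\ell_{avg}\ge-\Delta$, and the excursion argument is genuinely needed to rule out the degenerate scenario in which the battery never overflows.
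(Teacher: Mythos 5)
Your proof is correct and, for the identity $\ell_{avg}=-\Delta+\mathcal{O}_{avg}$ and the limit as $b_{\max}\to\infty$, it follows essentially the same route as the paper: an exact energy balance over $[0,t]$, the law of large numbers for the background CTMC together with $b(t)/t\to 0$, and then the domination $\mathcal{O}_{avg}\le \bar{r}\,\mathbb{P}[b_\infty\ge b_{\max}]$ via the sample-path coupling with the infinite-buffer system (the paper invokes the exponential decay of $\mathbb{P}[b_\infty\ge b_{\max}]$ from the Mitra analysis, whereas you only use that this probability vanishes, which suffices). The one genuine difference is your excursion/renewal--reward argument showing $\mathcal{O}_{avg}>0$: the paper's appendix proof asserts the strict inequality $\ell_{avg}>-\Delta$ in the lemma statement but never actually establishes strict positivity of the overflow rate, so your observation that a long enough exponential holding time in a state of $S_+$ forces overflow from any starting level, at a positive per-cycle expected reward, is a worthwhile addition that closes a small gap left implicit in the paper.
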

\begin{IEEEproof}
Applying energy conservation, we get
$$g_{cum}(t) - \mathcal{O}_{cum}(t) = d_{cum}(t)-\ell_{cum}(t) +b(t)$$
\begin{align*}
  \Rightarrow \frac{\ell_{cum}}{t} &= -\frac{(g_{cum}(t)-d_{cum}(t))}{t} +\frac{\mathcal{O}_{cum}(t)}{t} +\frac{b(t)}{t}\\
 \Rightarrow \frac{\ell_{cum}}{t} &= -\frac{1}{t}\int_{0}^{t}r_{X(s)}ds +\frac{\mathcal{O}_{cum}(t)}{t} +\frac{b(t)}{t}\\
 \Rightarrow\lim_{t \to \infty} \frac{\ell_{cum}}{t} &= -\lim_{t \to \infty}\frac{1}{t}\int_{0}^{t}r_{X(s)}ds +\lim_{t \to \infty}\frac{\mathcal{O}_{cum}(t)}{t}\\ & \quad +\lim_{t \to \infty}\frac{b(t)}{t}.
\end{align*}
Since the battery capacity is finite, $\lim\limits_{t \to
  \infty}\frac{b(t)}{t}=0$ and law of large numbers for Markov chains
implies that $\lim_{t \to \infty} \frac{1}{t}\int_{0}^{t}r_{X(s)}ds
=\Delta$. Therefore we get
$$\ell_{avg}=\lim_{t \to \infty}\frac{\ell_{cum}}{t}=-\Delta +\mathcal{O}_{avg}.$$

\ignore{
Also 
$\mathcal{O}_{cum}(t)\leq r_{\max}\times$ \{total time spent in $r_{\max}$ state with full battery content in the interval $[0,t]$\}. 
Dividing by $t$ and taking limit  $t\rightarrow\infty$, we get,
\begin{align}
   \mathcal{O}_{avg}&\leq r_{\max}\mathbb{P}[b=b_{\max}]        \label{eq:CCDF(b_max) finite buffer}\\
   &\leq r_{\max}\mathbb{P}[b_{\infty}\geq b_{\max}]  \label{eq:CCDF(b_max) infinite buffer}
 \end{align}} 

It is easy to see that $$\mathcal{O}_{avg} \leq \bar{r}\
\mathbb{P}[b=b_{\max}] \leq \bar{r} \ \mathbb{P}[b_{\infty}\geq
b_{\max}],$$ where $b_{\infty}$ denotes the stationary buffer
occupancy in an infinite buffer system seeing the same net generation
process.
\ignore{
where $b$ is the stationary battery content corresponding to finite
capacity battery content $b(t)$ whereas $b_{\infty}$ is the stationary
battery content corresponding to the infinite capacity battery content
$b_{\infty}(t)$. Both types of batteries are governed by the same
underlying Markov process $X(t)$. Note that we have used
\Cref{lem:battery_content_relation_infinite_and_finite_capacity} in
\eqref{eq:CCDF(b_max) finite buffer} to get \eqref{eq:CCDF(b_max)
  infinite buffer}.
}
When the drift is negative, as $b_{\max} \ra \infty$,
$\mathbb{P}[b_{\infty}\geq b_{\max}]$ decays exponentially with
$b_{\max}$ (see \cite{mitra1988stochastic}), which implies
that $$\lim_{b_{\max} \to \infty} \ell_{avg}=-\Delta.$$
\end{IEEEproof}

With this result we now prove \Cref{thm:LOLP_lower_bound}.
\begin{IEEEproof}[Proof of \Cref{thm:LOLP_lower_bound}]
It is not hard to see that 
 \begin{equation*}
    \ell_{avg}= \lim_{t \to \infty}\frac{\ell_{cum}(t)}{t} \leq -\underline{r}\LOLP \label{eq:l_avg_upper_bound}.
 \end{equation*}
From \Cref{lem:Lemma_neg_drift}, $\ell_{avg}\geq -\Delta.$
 Therefore, $$\LOLP\geq \frac{-\Delta}{-\underline{r}}.$$

 When the background CTMC has only a single state of discharge, i.e.,
 $|S_-|=1$, then $\ell_{avg}=-\underline{r}\LOLP$.  From
 \Cref{lem:Lemma_neg_drift}, $\lim_{b_{\max} \to \infty}
 \ell_{avg}=-\Delta$ which gives us
$$\lim_{b_{\max}\to \infty}\LOLP =
\frac{-\Delta}{-\underline{r}}.$$
\end{IEEEproof}

\ignore{
\subsection{Proof of \Cref{thm:LOLP_upper_bound}}
\begin{IEEEproof}
 If we consider the reversed process $X^r(t)$ of $X(t)$ in which $r_{X^r(t)}=-r_{X(t)}$, then a process $X(t)$ with $\Delta>0$ will have reversed process $X^r(t)$ with $\Delta^r<0$. Reversed process $X^r(t)$ will have $Q^r=Q$ and $R^r=-R$.
 
 Also time spent at empty battery condition, $b(t)=0$,  with process $X(t)$ when the drift is positive is equivalent to time spent at full battery condition, $b^r(t)=b_{\max}$, with reversed process $X_r(t)$ when the drift is negative.
 
Therefore we have,
\begin{align*}
    \LOLP &= \lim_{t\to \infty}\frac{1}{t}\int_0^t\mathbbm{1}_{\{b(s)=0\}}ds\\
         &= \lim_{t\to \infty}\frac{1}{t}\int_0^t\mathbbm{1}_{\{b^r(s)=b_{\max}\}}ds\\
         &\leq \lim_{t\to \infty}\frac{1}{t}\int_0^t\mathbbm{1}_{\{b^r_\infty(s)\geq b_{\max}\}}ds\\
         &= \mathbb{P}(b^r_\infty\geq b_{\max})
\end{align*} 
where $b^r_{\infty}$ is the stationary battery content with infinite capacity.

To find $\mathbb{P}(b^r_\infty\geq b_{\max})$, we use the same argument which is used in the proof of \Cref{lem:Lemma_neg_drift} i.e.
when the drift is negative, as $b^r_{\max} \ra \infty$, $\mathbb{P}[b^r_{\infty}\geq b_{\max}]\sim k \exp\{-\lambda_c^r\ b^r_{\max}\}$ where $-\lambda_c^r$ is the largest negative eigenvalue (having minimum magnitude) of ${R^r}^{-1}{Q^r}^T$ and $k$ is some constant (see \cite{mitra1988stochastic}). Since $R^r=-R$ and $Q^r=Q$,  $\lambda_c$ is the smallest positive eigenvalue of $R^{-1}Q^T$. This gives us
\begin{equation*}
    \lim_{b_{\max}\to \infty}\frac{\log(\mathbb{P}(b^r_\infty\geq b_{\max}))}{b_{\max}}=-\lambda_{c}.
\end{equation*}
Therefore when the drift is positive, we get 
\begin{equation*}
    \lim_{b_{\max}\to \infty}\frac{\log\LOLP}{b_{\max}} \leq -\lambda_{c}.
\end{equation*}  
\end{IEEEproof}
}

\ignore{

\subsection{LOLP exact asymptotics}

Assume $\Delta < 0.$ Let $b(t)$ denote the (regulated) battery
occupancy process with battery size $b_{\max},$ and $b_{\infty}(t)$
denote the battery occupancy process with an infinite battery. $b$ and
$b_{\infty}$ denote the corresponding stationary occupancies. We know:
\begin{enumerate}
\item On each sample path, $b(t) \leq b_{\infty}(t).$ 
\item From Mitra, $\prob{b_{\infty} \geq b_{max}} \sim C e^{-\lambda
    b_{\max}}$ as $b_{\max} \ua \infty.$
\end{enumerate}

\begin{theorem}
  \label{thm:LOLP_exact}
  As $b_{\max} \ra \infty,$ $$\log(b = b_{\max}) \sim -\lambda
  b_{\max}.$$
\end{theorem}

\begin{proof}
  Pick a state $m \in S_{-}.$ Consider the renewal process whose
  renewal instants are the times when the unregulated battery
  occupancy hits zero with the background chain being in state $m.$
  Let $T$ denote the duration of a renewal cycle. \jk{Need to show
    that $\Exp{T} < \infty.$} Let $A(b_{\max})$ denote the event that
  the unregulated process exceeds $b_{\max}$ over a certain renewal
  cycle; note that $A(b_{\max})$ is also the event that the regulated
  process hits $b_{\max}$ over the same renewal cycle.

It follows from the renewal reward theorem that
  \begin{align*}
    \prob{b_{\infty} \geq b_{\max}} &= \frac{\prob{A(b_{\max})}
      \Exp{\int_0^T \indicator_{b_{\infty}(t) \geq b_{\max}}
        dt \ |\ A(b_{\max})}}{\Exp{T}}\\
    \prob{b =  b_{\max}} &= \frac{\prob{A(b_{\max})}
      \Exp{\int_0^T \indicator_{b(t) = b_{\max}}
        dt \ |\ A(b_{\max})}}{\Exp{T}}
  \end{align*}
  
  Suppose we show that there exists $K > 0$ such that 
\begin{equation}
  \label{eq:res_time_ub}
  \Exp{\int_0^T
    \indicator_{b_{\infty}(t) \geq b_{\max}} dt \ |\ A(b_{\max})} <
  K
\end{equation}
 for large enough $b_{\max}.$ It then follows that
\begin{align*}
  \prob{b = b_{\max}} &= \frac{\prob{A(b_{\max})} \Exp{\int_0^T
      \indicator_{b(t) = b_{\max}} dt \ |\ A(b_{\max})}}{\Exp{T}} \\
  &\geq \frac{\prob{A(b_{\max})} q_m}{\Exp{T}} \\
  &\geq \frac{q_m}{K} \prob{b_{\infty} \geq b_{\max}}
\end{align*}
Thus, we have $$\prob{b_{\infty} \geq b_{\max}} \geq \prob{b =
  b_{\max}} \geq \frac{q_m}{K} \prob{b_{\infty} \geq b_{\max}},$$
which implies the statement of the theorem.

\jk{To complete this argument, we need to prove that $\Exp{T} <
  \infty$ and \eqref{eq:res_time_ub}.}
\end{proof}
}

\section{Decay rate for $\Delta > 0$}
\label{app:positive_drift}

\subsection{Proof of Lemma~\ref{lemma:ldp_inf_buffer}}
\label{app:inf_buffer_ld}

Let the sequence $\{b^r_{\infty}[n]\}$ denote the buffer occupancy in
the infinite buffer reversed system, sampled at the transition
instants of the (uniformized) background process $X(t).$ We
have
\begin{equation}
\label{disc_Lindley}
b^r_{\infty}[n+1]=(b^r_{\infty}[n]-r_{X_n} Y_n)_+,
\end{equation}
where $(z)_+ = \max(z,0).$ Since the discrete-time process
$\{b^r_{\infty}[\cdot]\}$ is obtained by sampling the continuous-time
process $\{b^r_{\infty}(\cdot)\}$ at the instants of a Poisson process
(of rate $q$), the PASTA property (see \cite{Wolff1982}) implies that
the time averages corresponding to both coincide. Moreover,
\eqref{disc_Lindley} is a Lindley recursion (see \cite{BigQueues})
with negative drift (since $\Delta^r < 0$). The logarithmic
asymptotics of $\prob{b^r_{\infty} \geq B}$ thus follow (see
Theorem~3.1 in \cite{BigQueues}) once we verify that the function
$\Lambda(\cdot)$ is well defined and satisfies the Gartner-Ellis
conditions \cite{Dembo1998}. This is done in
Lemma~\ref{lemma:Lambda_PF} below.


The function $\Lambda(\cdot)$ is characterized as follows.
\begin{lemma}
\label{lemma:Lambda_PF}
\begin{equation*}
\Lambda(\theta) = \left\{\begin{array}{ll}
\log(\rho_{M(\theta)}) & \theta \in \left(\nicefrac{-q}{\bar{r}},\nicefrac{-q}{\underline{r}}\right) \\
\infty & \text{ elsewhere}
\end{array}
\right.,
\end{equation*}
where $\rho_{M(\theta)}$ is the Perron Frobenious eigenvalue
corresponding to the matrix $M(\theta),$ defined as $$M_{l,m}(\theta)
= P_{l,m} \frac{q}{q+\theta r_l},$$
where recall that $P$ is the transition probability matrix of the embedded Markov chain
$\{Z_k\}$. 
 Moreover, $\Lambda(\cdot)$ is
convex and differentiable over
$\left(\nicefrac{-q}{\bar{r}},\nicefrac{-q}{\underline{r}}\right),$
$\Lambda'(0) < 0,$ and 
\begin{align}
  \label{eq:steep}
  \lim_{\theta \ua \nicefrac{-q}{\underline{r}}} \Lambda(\theta) =
  \lim_{\theta \da \nicefrac{-q}{\bar{r}}} \Lambda(\theta) = \infty.
\end{align}
\end{lemma}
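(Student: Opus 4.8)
\emph{Plan.} The core is an explicit evaluation of the moment generating function $\Exp{e^{\theta U_k}}$ followed by a Perron--Frobenius argument. First I would condition on the embedded path $Z_0,Z_1,\dots,Z_k$ and use the independence of $\{Y_k\}$ from $\{Z_k\}$ (Observation~1), together with the elementary identity $\Exp{e^{sY}} = \nicefrac{q}{(q-s)}$ for $Y\sim\mathrm{Exp}(q)$ when $s<q$ and $+\infty$ otherwise. Since $U_k$ is a sum of terms $-r_{Z_j}Y_j$, each sojourn contributes a factor $\frac{q}{q+\theta r_{Z_j}}$, finite exactly when $q+\theta r_{Z_j}>0$. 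Requiring finiteness along every path (each state is visited with positive probability by irreducibility) forces $q+\theta r_i>0$ for all $i$, i.e. $\theta\in\left(\nicefrac{-q}{\bar{r}},\nicefrac{-q}{\underline{r}}\right)$; outside this open interval some state contributes an infinite factor with positive probability, so $\Exp{e^{\theta U_k}}=\infty$ and hence $\Lambda(\theta)=\infty$, yielding the ``elsewhere'' branch.

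Inside the interval, summing path probabilities weighted by these factors expresses the expectation as a bilinear form in a matrix power,
\begin{equation*}
\Exp{e^{\theta U_k}} = \pi\t M(\theta)^k \mathbf{1},
\end{equation*}
where $M_{l,m}(\theta)=P_{l,m}\frac{q}{q+\theta r_l}$ (the source/destination grouping conventions yield the similar matrices $DP$ and $PD$ with identical spectra, so either gives the stated $M(\theta)$). Because the scalar prefactors are strictly positive on the interval, $M(\theta)$ inherits the irreducibility of $P$, so the Perron--Frobenius theorem gives a simple, strictly positive eigenvalue $\rho_{M(\theta)}$ equal to its spectral radius, with strictly positive left and right eigenvectors. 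Sandwiching $\mathbf 1$ between two multiples of the strictly positive right Perron eigenvector then gives $\pi\t M(\theta)^k\mathbf 1 = \Theta(\rho_{M(\theta)}^k)$, whence $\Lambda(\theta)=\lim_k \frac1k\log\Exp{e^{\theta U_k}} = \log\rho_{M(\theta)}$; in particular the defining limit exists and is independent of the initial law, as claimed.

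It remains to verify the regularity needed for Gärtner--Ellis \cite{Dembo1998}. Convexity of $\Lambda$ is inherited from the finite-$k$ functions $\frac1k\log\Exp{e^{\theta U_k}}$, each convex by Hölder's inequality and convexity being preserved under pointwise limits. For differentiability on the open interval I would invoke analytic perturbation theory: $M(\theta)$ is real-analytic in $\theta$ and $\rho_{M(\theta)}$ is a \emph{simple} eigenvalue, so it, and hence $\Lambda=\log\rho_{M(\theta)}$ (as $\rho>0$), depends real-analytically, in particular smoothly, on $\theta$. To locate $\Lambda'(0)$, note $M(0)=P$ has $\rho=1$ with right eigenvector $\mathbf 1$ and left eigenvector $\pi$; since $M'(0)=\diag(-r_i/q)\,P$ and $P\mathbf 1=\mathbf 1$, the first-order eigenvalue perturbation formula gives
\begin{equation*}
\rho'(0) = \pi\t M'(0)\mathbf 1 = \sum_i \pi_i\left(\tfrac{-r_i}{q}\right) = \frac{-\Delta}{q},
\end{equation*}
so $\Lambda'(0)=\rho'(0)=-\Delta/q<0$ since $\Delta>0$.

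Finally, for the steepness condition \eqref{eq:steep}, uniformization guarantees $P_{\ell\ell}=1+\nicefrac{Q_{\ell\ell}}{q}>0$ for every state $\ell$. Taking $\ell$ with $r_\ell=\underline{r}$, the diagonal entry $M_{\ell\ell}(\theta)=P_{\ell\ell}\frac{q}{q+\theta\underline{r}}\to+\infty$ as $\theta\uparrow\nicefrac{-q}{\underline{r}}$; since the spectral radius of a nonnegative matrix dominates each diagonal entry, $\rho_{M(\theta)}\ge M_{\ell\ell}(\theta)\to\infty$, hence $\Lambda(\theta)\to\infty$, and the symmetric argument with the state attaining $\bar{r}$ handles $\theta\downarrow\nicefrac{-q}{\bar{r}}$. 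I expect the main obstacle to be not the Perron--Frobenius computation, which is routine, but assembling the full regularity package --- pinning down differentiability via simplicity of the Perron root and confirming the divergence at both endpoints --- since these are precisely the hypotheses that make Gärtner--Ellis, and thus Lemmas~\ref{lemma:ldp_inf_buffer}--\ref{lemma:ldp_finite_buffer}, applicable.
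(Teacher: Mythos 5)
Your proposal is correct and follows the same skeleton as the paper's proof: represent $\Exp{e^{\theta U_k}}$ as a bilinear form in $M(\theta)^k$, identify $\Lambda$ with $\log\rho_{M(\theta)}$ via Perron--Frobenius, get convexity as a pointwise limit of convex functions, and differentiability from regularity of the simple Perron root. Two sub-steps differ, both in your favor for self-containedness. For $\Lambda'(0)<0$ the paper simply cites Lemma~3.2 of \cite{BigQueues}, whereas you compute $\Lambda'(0)=\pi\t M'(0)\mathbf 1=-\Delta/q$ by first-order eigenvalue perturbation at $M(0)=P$; this is correct (the normalization $\pi\t\mathbf 1=1$ and $P\mathbf 1=\mathbf 1$ make the formula collapse as you state) and gives the exact value rather than just the sign. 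For the steepness condition \eqref{eq:steep} the paper bounds $\rho(M(\theta))$ below by the Perron root of the symmetrized matrix $G(\theta)$ via Schwenk's theorem and lets the trace diverge; you instead use the elementary fact that the spectral radius of a nonnegative matrix dominates every diagonal entry, together with $P_{\ell\ell}>0$ from the strict uniformization $q>\max_i(-Q_{i,i})$ --- a more direct route to the same divergence (indeed the paper's diverging trace terms are exactly your diverging diagonal entries). Your explicit handling of the $DP$ versus $PD$ grouping via similarity is also a point the paper glosses over; nothing is missing.
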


\begin{proof}
Throughout the proof, we assume $\theta \in
(\nicefrac{-q}{\bar{r}},\nicefrac{-q}{\underline{r}}).$ It is not hard
to see that $\Exp{e^{\theta U_k}} = \infty$ if $\theta \notin
(\nicefrac{-q}{\bar{r}},\nicefrac{-q}{\underline{r}}).$

Define $v_k(l) = \Exp{e^{\theta U_k}\ |\ X_0 = l},$ $v_k = (v_k(l), l
\in S).$ The vector $v_k$ can be expressed inductively as follows.
\begin{align*}
  v_k(l) &= \sum_{m \in S} P_{l,m} \Exp{e^{-\theta \sum_{j=1}^k r_{X_j}
      Y_j}\ |\ X_2 = m} \Exp{e^{-\theta r_l Y_0}} \\
  &= \sum_{m \in S} P_{l,m} \frac{q}{q+\theta r_l} v_{k-1}(m) \\
  &= \sum_{m \in S} M_{l,m}(\theta) v_{k-1}(m)
\end{align*}
Thus, $v_k = M(\theta) v_{k-1} = M^k(\theta) v_0.$ Denoting the law of
the Markov process at time~$0$ by the row vector
$\pi_0,$ $$\Exp{e^{\theta U_k}} = \pi_0 M^k(\theta) v_0.$$
That $$\lim_{k \ra \infty} \frac{\log \Exp{e^{\theta U_k}}}{k} =
\rho_{M(\theta)}$$ now follows from the Perron Frobenius theorem
\cite[Theorem~3.1.1]{Dembo1998}.

That $\Lambda(\cdot)$ is convex follows from the fact that it is
pointwise limit of convex functions. Its differentiability follows
from the differentiability of Perron Frobenius eigenvalue of a
non-negative matrix with respect to its entries. That $\Lambda'(0) <
0$ follows from Lemma~3.2 in~\cite{BigQueues}.

Finally, to show \eqref{eq:steep}, we use $\rho(M(\theta)) \geq
\rho(G(\theta)),$ where $G(\theta)$ is a symmetric non-negative matrix
defined as $$G_{s,s'}(\theta) =
\sqrt{M_{s,s'}(\theta)M_{s',s}(\theta)};$$ (see Theorem~2 in
\cite{Schwenk1986}). It therefore suffices to show that
$\rho(G(\theta)) \ra \infty$ as $\theta \ua
\nicefrac{-q}{\underline{r}}$ and $\theta \da \nicefrac{-q}{\bar{r}},$
which follows trivially from the observation that
$\mathrm{trace}(G(\theta)) \ra \infty$ along each of the above limits
(note that all diagonal entries of $G$ are positive).
\end{proof}

\subsection{Proof of Lemma~\ref{lemma:ldp_finite_buffer}}
\label{app:finite_buffer_ld}

Lemma~\ref{app:finite_buffer_ld} follows from standard techniques in
large deviations theory that show that the decay rates associated with
$\prob{b^r = b_{\max}}$ and $\prob{b^r_{\infty} \geq b_{\max}}$ as
$b_{\max} \ra \infty$ are the same under very general conditions; see
\cite{Toomey98} and \cite[Section~6.5]{BigQueues}.

\end{document}